\newcounter{algsubstate}
\renewcommand{\thealgsubstate}{\alph{algsubstate}}
\newenvironment{algsubstates}
  {\setcounter{algsubstate}{0}%
   \renewcommand{\State}{%
     \stepcounter{algsubstate}%
     \Statex {\footnotesize\thealgsubstate:}\space}}
  {}
\newtheorem{theorem}{Theorem}%[section]
\newtheorem{lemma}{Lemma}%[section]
\newtheorem{problem}{Problem}
\newcommand{\cE}{\mathcal{E}}
\newcommand{\cM}{\mathcal{M}}
\newcommand{\cA}{\mathcal{A}}
\newcommand{\cB}{\mathcal{B}}
\newcommand{\bX}{\mathbf{x}}
\newcommand{\bY}{\mathbf{y}}
\newcommand{\bR}{\mathbf{r}}
\newcommand{\bZ}{\mathbf{z}}
\newcommand{\bF}{\mathbf{f}}
\newcommand{\poly}{\mathsf{poly}}
\newcommand{\supp}{\mathsf{supp}}
\newcommand{\test}{\mathsf{test}}
\DeclarePairedDelimiter\ceil{\lceil}{\rceil}
\DeclarePairedDelimiter\floor{\lfloor}{\rfloor}
\begin{document}

\title{Concomitant Group Testing}

\author{Thach V. Bui~\IEEEmembership{Member, IEEE}, Jonathan Scarlett~\IEEEmembership{Member, IEEE}
\thanks{Thach V. Bui was with the Department of Computer Science, National University of Singapore, Singapore. He is currently with the Faculty of Engineering, National University of Singapore, Singapore. Email: bvthach@nus.edu.sg.}
\thanks{Jonathan Scarlett is with the Department of Computer Science, Department of Mathematics, and Institute of Data Science, National University of Singapore, Singapore. Email: scarlett@comp.nus.edu.sg.}
}

% make the title area
\maketitle

\thispagestyle{plain}
\pagestyle{plain}

%\begin{IEEEkeywords}
%Non-adaptive threshold group testing, combinatorial mathematics, algorithms, sparse recovery.
%\end{IEEEkeywords}

\IEEEpeerreviewmaketitle

\begin{abstract}    
    In this paper, we introduce a variation of the group testing problem capturing the idea that a positive test requires a combination of multiple ``types'' of item.  Specifically, we assume that there are multiple disjoint \emph{semi-defective sets}, and a test is positive if and only if it contains at least one item from each of these sets.  The goal is to reliably identify all of the semi-defective sets using as few tests as possible, and we refer to this problem as \textit{Concomitant Group Testing} (ConcGT).  We derive a variety of algorithms for this task, focusing primarily on the case that there are two semi-defective sets.  Our algorithms are distinguished by (i) whether they are deterministic (zero-error) or randomized (small-error), and (ii) whether they are non-adaptive, fully adaptive, or have limited adaptivity (e.g., 2 or 3 stages).  Both our deterministic adaptive algorithm and our randomized algorithms (non-adaptive or limited adaptivity) are order-optimal in broad scaling regimes of interest, and improve significantly over baseline results that are based on solving a more general problem as an intermediate step (e.g., hypergraph learning).
\end{abstract}

\section{Introduction}
\label{sec:intro}

The main goal of group testing is to efficiently identify an unknown subset of items in a large population of items. A test on a subset of items is positive or negative depending on the testing model and the items included in the test. The procedure for choosing placements of items into tests can be viewed as an \textit{encoding} procedure, and classifying the items based on the test outcomes is the \textit{decoding} procedure. It is desirable to minimize the number of tests, the number of stages of adaptivity, and the decoding time.

In standard group testing, the subset of interest is contains all defective items, and the other items are referred to as non-defective items. The outcome of a test on a subset of items is positive if the subset has at least one defective item and negative otherwise.  A number of related models also exist; for example, in threshold group testing with a single threshold, the outcome of a test on a subset of items is positive if the number of defective items in the subset is equal or larger than the threshold and negative otherwise. 

One of the most general variants of group testing is \emph{complex group testing}. In this model, there are subsets of items $D_1, \ldots, D_c$ such that a test on a set of items is positive if at least one of these $D_i$ is a subset of the tested set, and negative otherwise.  The items in $D = D_1 \cup \ldots \cup D_c$ are often referred to as semi-defective items. Complex GT originated in molecular biology~\cite{torney1999sets}, and follow-up works include~\cite{chang2010identification,chen2008upper,chin2013non}. Angluin~\cite{angluin2008learning}, Abasi et al.~\cite{abasi2018non}, and Abasi and Nader~\cite{abasi2019learning} also considered an equivalent problem posed as learning a hidden hypergraph, where the preceding $D_i$ are interpreted as hyperedges.

In this work, we consider a problem that can be viewed as a special case of Complex GT or learning a hidden hypergraph, but whose special structure leads to a considerably smaller number of tests compared to general methods.  Specifically, we consider the case that the sets $\{ D_1, \ldots, D_c \}$ are the elements of $S_1 \times \ldots \times S_m$ for $m$ disjoint subsets $S_1,\dotsc,S_m$, meaning that a test is positive if and only if it contains at least one item from each of these sets.  We proceed with some motivation, the formal problem formulation, connections to the more general settings described above, and our main contributions.

\subsection{Motivation}
\label{sub:intro:motiv}

At a high level, our problem is motivated by two closely related phenomena that may arise depending on the application:
\begin{itemize}
    \item A certain ``failure'' is observed if multiple distinct kinds of ``defects'' occur simultaneously;
    \item A certain ``target'' is attained if multiple distinct kinds of items/users are present to ``co-operate'' in attaining the target.
\end{itemize}

For consistency with the existing group testing literature, we adopt terminology corresponding to former of these.  Each group can be treated as a semi-defective group, and each member of the group is considered as a semi-defective item. A test on a subset of items is positive if the subset contains \textit{at least one item} in each group, and negative otherwise.  Returning to the above two dot points, a positive test result in the first dot point indicates ``an overall failure is observed'', whereas a positive result in the second dot point indicates ``the target is achieved''.  Because of the requirement of different types of items complementing each other, we call this problem \textit{Concomitant Group Testing} (ConcGT).

We now outline some potential applications as motivation for the ConcGT problem:
\begin{itemize}
    \item The edge detecting problem was motivated by learning which pairs (or triplets, etc.) of chemicals react with each other \cite{bouvel2005combinatorial,abasi2019learning}.  The ConcGT problem can be motivated similarly, with the added idea that the chemicals have associated categories, and including one chemical from each category is both necessary and sufficient to produce a reaction.
    \item Group testing has been proposed as a method for detecting attackers in security applications, such as jamming a wireless network \cite{shin2009reactive}.  ConcGT may similarly be relevant in scenarios where multiple kinds of attackers need to co-operate in order to break the system (e.g., by breaking multiple distinct defenses).
    \item In devising medical treatments, the efficacy of a treatment may require a combination of drugs each serving a different purpose, e.g., targeting specific cells or signaling pathways \cite{NCI,rao2011modern}.  A natural problem is then to devise an effective drug combination without knowing \emph{a priori} which drugs serve which purpose.  The ConcGT problem corresponds to a situation where we are unable to observe the individual drug effects separately, but instead can only observe whether the \emph{overall} combination was effective.
    \item In \cite{malioutov2017learning}, it was proposed to use group testing to learn simple and interpretable machine learning classification rules, e.g., rules expressed as a disjunction (OR operation) of single-feature classifiers.  ConcGT may benefit this application by allowing more flexible classification rules without losing too much interpretability, e.g., a conjunction (AND operation) of disjunctions of single-feature classifiers.
\end{itemize}
Naturally, most of these applications may require more sophisticated ConcGT-like models (e.g., to model noisy test results) instead of the simple noiseless model that we will consider.  However, we believe that this simpler model serves as the ideal starting point for introducing the problem and establishing some initial theoretical foundations.

\subsection{Problem formulation}
\label{sub:intro:prob}

We define the concomitant group testing problem as follows.

\begin{problem} (Concomitant group testing)
Given a population of $n$ items, $N = [n] = \{1,\dotsc,n\}$, there are $m \geq 2$ disjoint nonempty subsets $S_1, \ldots, S_m \subset N$. A test on a subset of items is positive if the subset contains at least one item in each $S_i$ for $i = 1, \ldots, m$, and negative otherwise. Our goal is to use the test results to identify $S_1, S_2, \ldots, S_m$ up to a permutation of the indices $1,\dotsc,m$, ideally with efficient decoding and as few tests as possible.\footnote{Re-ordering these indices leaves the test results unchanged, so in general there is no hope of guaranteeing the ``correct'' order.}
\label{prob:ConcGT}
\end{problem}

Note that if $m = 1$, ConcGT becomes standard group testing. We suppose that \textit{cardinality upper bounds of all semi-defective subsets $S_1, \ldots, S_m$ are known}, namely, $|S_i| \leq s_i$ for $i = 1, 2, \ldots, m$. When $m = 2$, we sometimes consider the case that they are exactly known (i.e., $|S_1| = s_1$ and $S_2 = s_2$) or known to within a constant factor (see Appendix \ref{app:KnownConst}).  Throughout the paper we will specify which results are for $m=2$ and which are for general $m$.

We refer to $S_1,\dotsc,S_m$ as \emph{semi-defective sets}, and we say that a size-$m$ collection of items is \emph{defective} if it contains one item from each of $S_1,\dotsc,S_m$, and \emph{non-defective} otherwise.  When $m=2$, we refer to these as \emph{defective pairs} and \emph{non-defective pairs}.

\textbf{Lower bound:} Here we derive an information-theoretic lower bound for ConcGT. Assume that $|S_i| = s_i$ for $i = 1, 2, \ldots, m$. 
Since all subsets are disjoint, there are $\binom{n - \sum_{v = 1}^{i - 1} s_v}{s_i}$ choices for subset $S_i$ for $i = 1, 2, \ldots, m$ after choosing $S_1, \ldots, S_{i - 1}$, where $s_0 = |S_0| = 0$. Therefore, there are $\prod_{i = 1}^m \binom{n - \sum_{v = 1}^{i - 1} s_v}{s_i}$ possibilities to choose $S_1, S_2, \ldots, S_m$. On the other hand, because each test has only two possible outcomes, the minimum number of tests required to uniquely identify all subsets $S_1, \ldots, S_m$ is:
\begin{equation}
{\log_2{\left( \prod_{i = 1}^m \binom{n - \sum_{v = 1}^{i - 1} s_v}{s_i} \right)}} = \Omega \left( { \sum_{i = 1}^m s_i \log{ \frac{n - \sum_{v = 1}^{i - 1} s_v}{s_i}}} \right).
\label{eqn:lower}
\end{equation}
Under the mild condition $\sum_{v = 1}^{m} s_v \leq \frac{n}{2}$ (or similarly with any constant in $(0,1)$ in place of $\frac{1}{2}$),~\eqref{eqn:lower} can be simplified to $\Omega \big( \sum_{i = 1}^m s_i \log{\frac{n}{s_i}} \big)$. Because $|S_i| = s_i$ for $i = 1, 2, \ldots, m$ is an instance of the case $|S_i| \leq s_i$ for $i = 1, 2, \ldots, m$, the result in~\eqref{eqn:lower} is also a lower bound for the more general case.  The following theorem formally states the resulting lower bound, and extends it to the case of non-zero error probability and a randomly varying number of tests.

\begin{theorem}
For the ConcGT problem with $m \ge 2$ disjoint nonempty subsets $S_1, \ldots, S_m \subset [n]$ with $|S_i| \leq s_i$, to deterministically recover $S_1, \ldots, S_m$ with a fixed number of tests (possibly adaptive), the required number of tests is at least
\begin{align}
\log_2{\left( \prod_{i = 1}^m \binom{n - \sum_{v = 1}^{i - 1} s_v}{s_i} \right)} &= \Omega \left( \sum_{i = 1}^m s_i \log{ \frac{n - \sum_{v = 1}^{i - 1} s_v}{s_i}} \right) \label{eq:lb} \\
&= \Omega \bigg( \sum_{i = 1}^m s_i \log{\frac{n}{s_i}} \bigg), \mbox{ if } \sum_{v = 1}^{m} s_v \leq \frac{n}{2}. 
\end{align}
Moreover, the same order-wise lower bound remains true even when the algorithm is given the following two kinds of flexibility: (i) it is randomized and may fail with some constant probability $\delta \in (0,1)$; and (ii) the number of tests is random and the lower bound is on the \emph{average} number of tests.
\label{thm:lower}
\end{theorem}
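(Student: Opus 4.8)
For the first (deterministic, fixed number of tests) bound I would make rigorous the counting argument already sketched, using decision trees, and then simplify asymptotically. It suffices to restrict to the sub-family of instances with $|S_i| = s_i$ for all $i$, of which there are $K := \prod_{i=1}^m \binom{n - \sum_{v=1}^{i-1} s_v}{s_i}$ (choose $S_1,\dots,S_m$ in turn, $s_0 = 0$). Since we only need the unordered collection $\{S_1,\dots,S_m\}$, the number of distinguishable answers is at least $K/m!$, and equals $K$ when the $s_i$ are distinct. A deterministic algorithm using at most $t$ tests, adaptive or not, is a binary decision tree of depth $\le t$, hence has at most $2^t$ leaves, and its output is constant on each leaf; so instances with different unordered collections reach different leaves, giving $2^t \ge K/m!$, i.e.\ $t \ge \log_2 K - \log_2 m!$ (the $\log_2 m!$ correction does not affect any order-wise claim, and vanishes for distinct $s_i$). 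The first simplification in \eqref{eq:lb} then comes from $\binom{a}{b} \ge (a/b)^b$, i.e.\ $\log_2 \binom{n - \sum_{v<i}s_v}{s_i} \ge s_i \log_2 \frac{n - \sum_{v<i}s_v}{s_i}$; for the second, $\sum_{v=1}^m s_v \le \tfrac n2$ forces $n - \sum_{v=1}^{i-1} s_v \ge \tfrac n2$ for every $i$, so each term is at least $s_i\log_2\frac{n}{2s_i} = \Omega\big(s_i \log_2 \frac{n}{s_i}\big)$ in the regime $s_i = o(n)$.

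For the second part I would replace the decision-tree count by a Fano-inequality argument that is robust both to randomization and to a random, adaptively chosen number of tests. Fix an algorithm recovering the unordered collection with probability $\ge 1-\delta$ on every instance (probability over its internal randomness $R$), and let $T$ be its random number of tests. Put the uniform prior on the $K' \ge K/m!$ distinguishable answers, let $X$ be the resulting random answer, $Y \in \{0,1\}^{\,T}$ the binary transcript of test outcomes, and $\widehat X = \widehat X(Y,R)$ the output; because the per-instance success probability is $\ge 1-\delta$, the averaged error probability $\Pr[\widehat X \ne X]$ is $\le \delta$, and $X \to (Y,R) \to \widehat X$ is a Markov chain with $X \perp R$. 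Fano then gives $H(X \mid \widehat X) \le 1 + \delta \log_2 K'$, so $I(X;\widehat X) \ge (1-\delta)\log_2 K' - 1$, while on the other side
\[
I(X;\widehat X) \;\le\; I\big(X;(Y,R)\big) \;=\; I(X;Y\mid R) \;\le\; H(Y\mid R).
\]

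The last step I need is $H(Y \mid R) \le \E[T]$, and this is the one point that requires genuine care. Conditioned on $R = r$ the algorithm is deterministic, and whether it halts after a given string of outcomes is determined by that string and $r$; hence the set of possible transcripts is prefix-free, so its lengths obey Kraft's inequality $\sum_y 2^{-\ell(y)} \le 1$. Applying Gibbs' inequality with the sub-probability $q(y) = 2^{-\ell(y)}$ yields $H(Y \mid R = r) \le \sum_y \Pr[Y = y \mid R = r]\,\ell(y) = \E[T \mid R = r]$, and averaging over $r$ gives $H(Y\mid R) \le \E[T]$. Chaining everything, $\E[T] \ge (1-\delta)\log_2 K' - 1 = \Omega(\log_2 K)$, matching the first-part bound and absorbing both forms of flexibility. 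I expect the main obstacle to be exactly this: a randomized algorithm whose stopping time is itself adaptive and random — handled by noting that, for fixed internal randomness, the transcript forms a prefix-free code and then invoking the standard entropy/expected-length relation; everything else is Fano's inequality plus the elementary binomial estimates.
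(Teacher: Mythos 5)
Your proposal is correct, and while the deterministic part coincides with the paper's counting argument (you make the decision-tree structure explicit and add a refinement the paper glosses over, namely that only the \emph{unordered} collection $\{S_1,\dotsc,S_m\}$ is recoverable, costing a harmless $\log_2 m!$ in the regimes of interest), the second part takes a genuinely different route. The paper splits the added flexibility into two separate reductions: randomization with a fixed number of tests is handled by Yao's minimax principle plus a citation to the standard Fano-type analysis of Baldassini et al., and a random number of tests is then handled by truncating the algorithm after $C\tau_{\rm avg}$ tests and applying Markov's inequality and a union bound to reduce to the fixed-length case. You instead run a single, self-contained Fano argument that absorbs both flexibilities at once: the key technical step is that, conditioned on the internal randomness $R=r$, the set of reachable halting transcripts is prefix-free, so Kraft's inequality and the entropy/expected-codeword-length bound give $H(Y\mid R)\le \E[T]$ directly, with no truncation needed. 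This buys a cleaner and quantitatively sharper bound ($\E[T]\ge(1-\delta)\log_2 K'-1$, versus the constant-factor and error-probability losses incurred by the paper's truncation), at the cost of slightly more information-theoretic machinery. Two minor caveats, neither fatal: your step $s_i\log_2\frac{n}{2s_i}=\Omega\big(s_i\log_2\frac{n}{s_i}\big)$ needs $s_i$ bounded away from $n/2$ (as you note; a marginally more careful use of $\binom{a}{b}\ge\max\{(a/b)^b,2^b\}$ for $a\ge 2b$ removes the restriction), and the claim that $\log_2 m!$ never affects the order can fail in the extreme regime where $m$ grows like $n$ with all $s_i=1$ --- but there the paper's own ordered-tuple count is equally affected, so your treatment is if anything the more careful of the two.
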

\begin{proof}
    For deterministic recovery with a fixed number of tests, the claim follows directly from the above counting argument.   For randomized designs with a fixed number of tests, we use Yao's minimax principle to consider uniformly random $S_1,\dotsc,S_m$ and a deterministic algorithm, upon which the proof becomes identical to the standard setting \cite[Thm.~3.1]{baldassini2013capacity} but with $n \choose k$ replaced by $\prod_{i = 1}^m \binom{n - \sum_{v = 1}^{i - 1} s_v}{s_i}$.  We omit the details, as this is now a very standard analysis.

    To prove the result even under the added flexibility of a variable number of tests, suppose for contradiction that such an algorithm existed with error probability at most $\delta \in (0,1)$ and an expected number of tests $\tau_{\rm avg}$ strictly smaller than the right-hand side of \eqref{eq:lb} (i.e., with $\Omega(\cdot)$ replaced by $o(\cdot)$).  Then consider an algorithm that runs the original algorithm, but stops and fails after $C\tau_{\rm avg}$ tests have been taken.  By Markov's inequality and the union bound, this algorithm has error probability at most $\delta + \frac{1}{C}$, which is still a constant in $(0,1)$ when $C$ is large enough.  Since the number of tests is at most $C\tau_{\rm avg}$ and $C$ is constant, this contradicts the order-wise lower bound for the case of a fixed number of tests.
\end{proof}

\subsection{Related work}
\label{sec:intro:related}

{\bf Summary of criteria.} There are many criteria considered when studying group testing. The first important criterion is the number of stages for deploying tests. When the number of stages is one, the design is non-adaptive and all tests are designed to test simultaneously. On the other hand, when there are many stages, the design is adaptive and a test depends on the designs of the previous tests. Adaptive designs often attain an information-theoretically optimal bound on the number of tests. Normally, it takes more time to deploy the adaptive design than the non-adaptive design. To compensate the time consuming issue, the number of tests in the adaptive design should ideally be smaller than the one in the non-adaptive design. 

The second important criterion is between deterministic and randomized procedures. A deterministic procedure is one that produces the same result given the same input, while a randomized procedure includes a degree of randomness in it, and therefore, it might not produce the same result given the same input.  For randomized and possibly adaptive designs, the following categories of randomized procedure are often distinguished:
\begin{itemize}
	\item A \emph{Monte Carlo} procedure outputs the correct result \emph{with high probability} using a \emph{fixed number of tests} which is ideally small.\footnote{Often ``number of tests'' is replaced by ``runtime'' in the context of general algorithms (beyond group testing), but we are primarily interested in the number of tests.}
	\item A \emph{Las Vegas} procedure outputs the correct result with \emph{zero error probability}, but with a \emph{random number of tests} whose average is ideally small.
\end{itemize}
The third important criterion is what is assumed about the defective set, with a particular distinction between probabilistic and combinatorial GT.  The probabilistic setting places a distribution on the items, whereas the combinatorial setting considers the worst case (subject to suitable constraints such as cardinality upper bounds).  We consider the latter scenario, but we note that when we allow randomization and some error probability, this is essentially equivalent to the probabilistic setting by Yao's minimax principle.  Finally, both exact recovery and approximate recovery are often considered in the group testing literature, but the former is much more common and is the focus of our work.

{\bf Overview of literature.} Standard group testing has been intensively studied since its inception~\cite{dorfman1943detection} and widely applied in many fields such as computational and molecular biology~\cite{du2000combinatorial}, networking~\cite{DyachkovPSV19}, and Covid-19 testing~\cite{shental2020efficient,gabrys2020ac}.  In the following, we first outline combinatorial GT with deterministic recovery guarantees, and then we outline probabilistic GT that allows some error probability.  We let $n$ denote the number of items and $d$ the number of defectives. 

For combinatorial GT with non-adaptive designs, it is well-known that a strong $d^2$ dependence is required in the number of tests \cite{kautz1964nonrandom}.  Typical upper bounds provide near-optimal $O(d^2 \log n)$ scaling, and this was attained with $\poly(n)$ decoding time \cite{d1yachkov982bounds,porat11} and then more efficient $\poly(d, \log{n})$ decoding time \cite{indyk2010efficiently,ngo2011efficiently,cheraghchi2013noise,cheraghchi2020combinatorial}.  Moreover, with a two-stage design, de Bonis et al.~\cite{de2005optimal} obtained an order-optimal bound on the number of tests, namely $O(d \log{(n/d)})$.  Refined results for adaptive testing were also given in various works such as \cite{hwang1972method,mezard2011group,aldridge2020conservative}, but since we do not seek to characterize any precise constant factors in this paper, the bound of \cite{de2005optimal} will suffice.

For probabilistic GT, $O(d \log n)$ scaling on the number of tests has long been known, and a line of recent works has led to increasingly precise constant factors in the required number of tests, e.g., see \cite{aldridge2019group,aldridge2014group,scarlett2016phase,coja2020optimal}.  A related line of works reduced the decoding time from $\poly(n)$ to $\poly(d, \ln{n})$ (mostly without seeking precise constant factors), culminating in near-optimal $O(d \log n)$ number of tests \emph{and} decoding time in \cite{cheraghchi2020combinatorial,price2020fast}.

A number of other group testing models have also been considered, with threshold group testing being a notable example \cite{damaschke2006threshold,cheraghchi2013improved,de2017subquadratic,chen2009nonadaptive,bui2021improved,bui2023non}.  However, to our knowledge, almost all of these models are not closely related to our setup; the ones we find to be ``closest'' are outlined in the following subsection.

\subsection{Connection to complex GT and learning a hidden hypergraph}
\label{sub:intro:hiddengraphs}

The problems of complex GT and learning a hidden hypergraph are equivalent; we outlined the former in the introduction, and outline the latter as follows.  The problem consists of learning a hidden hypergraph $G = (V, E)$ using edge-detecting queries~\cite{angluin1988queries,angluin2008learning}, where the number of vertices $|V|= n$ is given to the learner. A test query on a set $S$ of items, denoted as $Q(S)$, returns YES if $S$ contains (as a subset) at least one edge in $E$ and NO otherwise. The goal is to identify $G$ using as few queries as possible.   We re-iterate that ConcGT in a special case of this problem in which $E$ is precisely the set $S_1 \times \dotsc \times S_m$.

In the following, we discuss the combinatorial setting, without any notion of error probability.  
For non-adaptive designs, to make learning possible, it has been shown~\cite{abasi2018non} that $G$ must be a Sperner hypergraph, i.e., no edge is a subset of another. If $G$ has at most $s$ edges and each edge has up to $m$ vertices, the general problem of learning $G$ is equivalent to the problem of exact learning of a monotone disjunctive normal form (MDNF) with at most $s$ monomials in which each monomial contains at most $r$ variables ($s$-term $m$-MDNF) from test queries. The authors also prove that for $n \geq (\max\{s, m \})^2$, any non-adaptive algorithm for learning $s$-term $m$-MDNF must ask at least $\max(t(s-1, m), t(s, m - 1)) \log{n} = \Omega(t(s, m) \log{n})$ test queries and runs in at least $\Omega(t(s, m)n \log{n})$ time in the worst case, where
\begin{equation}
t(s, m) = \frac{s + m}{\log{\binom{s + m}{m} }} \binom{s + m}{m}.
\end{equation}

When $m = 2$, $|S_1| \leq s_1$, and $|S_2| \leq s_2$, $G$ has at most $s = s_1 s_2$ edges. Therefore, the minimum number tests required to solve the ConcGT problem by using the learning a hidden hypergraph approach~\cite{abasi2018non} and the non-adaptive approach with exact recovery is at least
\begin{equation}
t(s, m) \log{n} = t(s_1 s_2, 2) \log{n} = \frac{s_1 s_2 + 2}{\log{\binom{s_1 s_2 + 2}{2} }} \binom{s_1 s_2 + 2}{2} \cdot \log{n} = O\left( \frac{\log{n}}{\log{(s_1 s_2)}} (s_1 s_2)^3 \right). \nonumber
\end{equation}
We will see that this is significantly higher than our ConcGT bound for deterministic non-adaptive designs, whose leading term is $\max\{s_1,s_2\}^3$.  In Appendix \ref{appx:sec:fullCmp}, we perform a more detailed comparison showing similar gains under randomized designs and/or multiple stages of adaptivity; in particular, we will often attain a near-optimal leading term of $\max\{s_1,s_2\}$, compared to $s_1s_2$ or higher for hypergraph learning.  In short, the hidden hypergraph problem (or Complex GT) is significantly more general than ConcGT, but this generality comes at the price of requiring considerably more tests.

\subsection{Contributions}
\label{sub:intro:contri}

Our first contribution is to introduce the ConcGT problem, which has not been considered previously to the best of our knowledge.  Having established the lower bound in Theorem \ref{thm:lower}, we first consider deterministic procedures to find the semi-defective sets for both non-adaptive and adaptive designs.  We assume cardinality upper bounds for the semi-defective subset sizes.  We first state our result for a deterministic adaptive design, which we prove in Section~\ref{sec:det:adaptive}.

\begin{theorem} (Adaptive deterministic design)
Consider the ConcGT problem with $m \geq 2$ disjoint nonempty semi-defective subsets $S_1, \ldots, S_m \subset [n]$ satisfying $|S_i| \leq s_i$ for $i = 1, 2, \ldots, m$. Then:
\begin{itemize}
    \item When $m = 2$, the two semi-defective sets can be found using at most $\log_2{n}$ stages and $O\big( s_1 \log{\frac{n}{s_1}} + s_2 \log{\frac{n}{s_2}} \big)$ tests.
    \item  When $m \geq 2$, all semi-defective sets can be found using at most $\frac{m \log{n}}{1 - \log{2}} + 3$ stages and $O\big( m^2 \log{n} + \sum_{i = 1}^m  s_i \log{\frac{n}{s_i} } \big)$ tests.
\end{itemize}
\label{thm:adaptive:multi}
\end{theorem}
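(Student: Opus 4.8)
The plan is to reduce ConcGT to a handful of \emph{standard} adaptive group testing instances, invoking as a black box the two-stage, $O(d\log(n/d))$-test scheme of de Bonis et al.~\cite{de2005optimal} for finding at most $d$ defectives among $n$ items. The key enabler is a \emph{representative} trick: suppose we have found one item $a_i\in S_i$ for every $i$. Then for any index $j$ and any set $T$, the ConcGT test on $\big(\{a_1,\dots,a_m\}\setminus\{a_j\}\big)\cup T$ is positive if and only if $T$ meets $S_j$, because disjointness guarantees that the reps $a_i$, $i\neq j$, supply exactly the ``missing types'' and the outcome is governed purely by whether $T$ hits $S_j$. Such augmented tests therefore faithfully simulate ordinary group tests for $S_j$, so running the de Bonis scheme over $[n]\setminus\{a_1,\dots,a_m\}$ recovers $S_j$ (then we add back the known $a_j$). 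Performing this in parallel over all $j$ costs $2$ stages and $\sum_{i=1}^m O\!\big(s_i\log\frac{n}{s_i}\big)$ tests, the dominant term in the claimed bound; the rest of the work --- ``Phase~1'', locating the representatives --- is where the stage count and the extra $O(m^2\log n)$ term come from.

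For $m=2$, Phase~1 is a binary search that maintains a region $R$ (initially $[n]$) known to meet both $S_1$ and $S_2$, i.e., whose test is positive; this holds at the start because both sets are nonempty. Split $R=R_L\cup R_R$ and test both halves. If a half is positive it again meets both sets and we recurse into it. If both halves are negative, then disjointness forces each half to miss exactly one set while $R$ meets both, which is possible only if the two sets are \emph{separated} across the halves --- one set's portion of $R$ lying entirely in $R_L$ and the other's entirely in $R_R$ (in one of the two orientations, which we need not identify). In that event we finish with two binary searches run \emph{simultaneously}: to find an element of the set living in $R_L$ we test $W\cup R_R$ for $W\subseteq R_L$, which is positive iff $W$ meets that set since $R_R$ already supplies a complementary element, and symmetrically inside $R_R$ using $R_L$ as the supplier. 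Because a singleton region can never meet both sets, a bisection branch that never sees ``both halves negative'' simply stops at a two-element region, which is itself a valid pair of representatives. Either way Phase~1 finishes in roughly $\log_2 n$ stages using $O(\log n)$ tests, and with the two-stage Phase~2 this yields the stated $O\!\big(s_1\log\frac{n}{s_1}+s_2\log\frac{n}{s_2}\big)$; minor additive slack in the stage count can be removed by careful bookkeeping, e.g.\ pipelining the last search step into Phase~2.

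For general $m$, Phase~1 peels the semi-defective sets off one at a time. Fix an ordering $v_1,\dots,v_n$ and consider the prefixes $V_j=\{v_1,\dots,v_j\}$; the test on $V_j$ is monotone in $j$ and positive precisely when $V_j$ meets all $m$ sets, so a binary search locates the smallest positive prefix $V_{j^\star}$. Then $v_{j^\star}$ lies in exactly one set $S_{i^\star}$, and $V_{j^\star-1}$ meets every \emph{other} set but is disjoint from $S_{i^\star}$ (otherwise $V_{j^\star-1}$ would already meet all sets, contradicting minimality). Taking $a_{i^\star}:=v_{j^\star}$, restricting the universe to $V_{j^\star-1}$, and thereafter augmenting every test with $a_{i^\star}$ yields an instance with one fewer active set; crucially, the restricted universe is still disjoint from all previously peeled sets, because it is nested inside the earlier universes, so the invariant is preserved. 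Iterating $m$ times produces one representative per set and we finish with Phase~2. Each round is a single binary search using $O(\log n)$ tests and $O(\log n)$ stages, so Phase~1 contributes $O(m\log n)$ tests and $O(m\log n)$ stages; the theorem's $O(m^2\log n)$ term is lower order and comes from the additional care needed when a set's elements get split by a chosen partition, in which case a candidate representative can be produced more than once and must be reconciled or re-verified.

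The step I expect to be the main obstacle is making Phase~1 rigorous for general $m$: one must verify that the peeling invariant --- after each restriction the surviving universe still meets every remaining set and is disjoint from every already-peeled set --- really does hold through all $m$ rounds, so that each binary search is well-posed and returns a genuinely new type, and one must bound how the stages of the successive binary searches (plus small per-round overheads) compose into the precise stage count in the statement. By comparison, correctness of the Phase~2 reduction is immediate from disjointness, and the $m=2$ analysis is a routine --- if slightly delicate --- bisection argument.
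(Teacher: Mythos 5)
Your proposal is correct, and its overall skeleton matches the paper's: both proofs first locate one representative per semi-defective set (Phase 1) and then, for each $i$, augment every test with the other representatives so that the ConcGT oracle degenerates into a standard group-testing oracle for $S_i$, solved in two parallel stages via de Bonis et al.\ with $O\big(\sum_i s_i \log\frac{n}{s_i}\big)$ tests; your Phase 2 and its justification via disjointness are essentially identical to the paper's. Where you genuinely diverge is Phase 1. For $m=2$ the paper splits the current region into halves and quarter-halves and tests six subsets per stage, which guarantees that \emph{some} tested subset of half the size is positive, so it never needs a fallback; your two-tests-per-stage bisection with a ``separated mode'' (two parallel binary searches, each using the opposite half as the complementary supplier) is a valid alternative whose stage counts telescope to the same $\log_2 n$ total (the paper absorbs the additive slack by stopping the bisection at regions of size $3$ or $4$). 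For general $m$ the paper partitions the region into $m+1$ parts, tests all $\binom{m+1}{m}$ unions, and uses pigeonhole to keep a positive union of $\frac{m}{m+1}$ the size, costing $O(m^2\log n)$ tests over $\log_{(m+1)/m} n$ stages; your minimal-positive-prefix peeling is a genuinely different and arguably cleaner argument, and your invariant (the surviving prefix meets every un-peeled set and is disjoint from every peeled set) does hold for exactly the reasons you give, since the universes are nested prefixes. Notably, your version needs only $O(m\log n)$ tests in Phase 1, strictly improving the paper's $O(m^2\log n)$ overhead while keeping $O(m\log n)$ stages, so it implies the stated theorem with room to spare; your closing speculation that the $O(m^2\log n)$ term reflects ``reconciling duplicated representatives'' is off the mark (in the paper it simply comes from issuing $m+1$ tests at each of $\Theta(m\log n)$ splitting stages), but this does not affect the validity of your argument.
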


This result amounts to having additional $O(m^2 \log{n})$ scaling in the number of tests compared to the information-theoretic lower bound in Theorem \ref{thm:lower}, for both $m = 2$ and $m \geq 2$.  In particular, we have an order-optimality guarantee when $m = O(1)$, or more generally when the growth of $m$ is sufficiently slow.

Next, we turn to \emph{non-adaptive} deterministic designs,\footnote{Note that here ``deterministic'' means that the test design and decoder are fixed and do not use any randomization.  The existence of these designs is still proved non-constructively using the probabilistic method.}  continuing with the assumption of semi-defective set cardinality upper bounds.  Our results are summarized in the following theorem, and the proof is presented in Section~\ref{sec:det:nonadaptive}.

\begin{theorem} (Non-adaptive deterministic design)
Consider the ConcGT problem with $m=2$ disjoint nonempty semi-defective subsets $S_1, S_2 \subset [n]$, with $|S_1| \leq s_1$ and $|S_2| \leq s_2$. There exists a non-adaptive deterministic algorithm with $O \big( s_{\max}^3 \log{\frac{n}{s_{\max}}} \big)$ tests such that $S_1$ and $S_2$ can be found in $O \big( s_{\max}^3 n^2 \log{\frac{n}{s_{\max}}} \big)$ time, where $s_{\max} = \max \{ s_1, s_2 \}$.
\label{thm:nonadaptive}
\end{theorem}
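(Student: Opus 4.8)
The plan is to choose a non-adaptive design whose decoder works pair-by-pair, and to reduce correctness to a single combinatorial separation property that the design can be shown to possess by the probabilistic method. The guiding observation is that the defective pairs form the complete bipartite ``graph'' $S_1 \times S_2$, so it suffices to determine, for every unordered pair $\{u,v\}\subseteq[n]$, whether $\{u,v\}$ is a defective pair; from this edge set one recovers $S_1\cup S_2$ as the non-isolated vertices and then $\{S_1,S_2\}$ by reading off the (unique) bipartition of the resulting connected complete bipartite graph — concretely, picking any edge $\{a,b\}$ one has $S_2=N(a)$ and $S_1=N(b)$, where $N(\cdot)$ is the neighbourhood in the decoded graph. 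The decoding rule I would use is: declare $\{u,v\}$ a defective pair if and only if \emph{every} test containing both $u$ and $v$ is positive. One direction is immediate: if, say, $u\in S_1$ and $v\in S_2$, then any test containing both already contains an item of $S_1$ and an item of $S_2$, hence is positive.

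For the converse I would require the design to satisfy the following pair-versus-$s_{\max}$-set separation property, and show that it is exactly what makes the rule correct:
\begin{equation*}
(\star)\qquad \forall\,\{u,v\}\subseteq[n]\ \ \forall\,B\subseteq[n]\setminus\{u,v\}\text{ with }|B|\le s_{\max}:\quad \exists\text{ a test }T\text{ with }\{u,v\}\subseteq T\text{ and }T\cap B=\emptyset .
\end{equation*}
Given $(\star)$, let $\{u,v\}$ be a non-defective pair. Since $S_1$ and $S_2$ are disjoint and $|\{u,v\}|=2$, the pair cannot meet both $S_1$ and $S_2$ (that would make it defective), so it is disjoint from at least one of them; call that set $B^\ast\in\{S_1,S_2\}$, and note $|B^\ast|\le s_{\max}$. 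Applying $(\star)$ with $B=B^\ast$ yields a test $T\supseteq\{u,v\}$ with $T\cap B^\ast=\emptyset$, and such a $T$ is negative because it contains no item of the semi-defective set $B^\ast$. Thus every non-defective pair lies in some negative test and is rejected by the rule, so the set of accepted pairs equals $S_1\times S_2$ exactly, and the extraction step above returns $\{S_1,S_2\}$.

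It remains to exhibit, for the stated number of tests, a design satisfying $(\star)$; I would argue existence by the probabilistic method (so the design is deterministic but obtained non-constructively, consistent with the footnote in the statement). Form $t$ tests, each including every item independently with probability $q=\Theta(1/s_{\max})$ (the choice $q=2/(s_{\max}+2)$ maximizes the relevant quantity). For a fixed triple $(\{u,v\},B)$, one test satisfies the required condition with probability $q^2(1-q)^{|B|}\ge q^2(1-q)^{s_{\max}}=\Theta(s_{\max}^{-2})$, so $(\star)$ fails for that triple with probability at most $(1-\Theta(s_{\max}^{-2}))^t$. Taking a union bound over the at most $n^2\sum_{j\le s_{\max}}\binom{n}{j}$ triples and using $\log\big(\sum_{j\le s_{\max}}\binom{n}{j}\big)=O(s_{\max}\log(n/s_{\max}))$ (the pair-count contribution $O(\log n)$ being absorbed in the usual regime where $s_{\max}$ is not within a constant factor of $n$), the overall failure probability is below $1$ once $t=\Theta(s_{\max}^{2})\cdot O(s_{\max}\log(n/s_{\max}))=O(s_{\max}^{3}\log(n/s_{\max}))$; hence a valid design exists with that many tests. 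Decoding then scans all $O(s_{\max}^{3}\log(n/s_{\max}))$ tests for each of the $O(n^2)$ pairs and does $O(n^2)$ extra work to read off the bipartition, for a total of $O(s_{\max}^{3}n^{2}\log(n/s_{\max}))$. I expect the main obstacle to be conceptual rather than computational: one must spot that ``a pair is defective $\Leftrightarrow$ it appears in no negative test'' is the right decodable structure, and that $(\star)$ — a condition a notch stronger than ordinary $s_{\max}$-disjunctness — is precisely what forces every non-defective pair into a negative test, while (unlike the generic hypergraph-learning route) costing only an $s_{\max}^{3}$ rather than an $(s_1s_2)^{3}$ factor.
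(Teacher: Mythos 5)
Your proposal is correct and follows essentially the same route as the paper: your separation property $(\star)$ is precisely the definition of an $(n,2,s_{\max})$-disjunct matrix, your decoding rule (a pair is defective iff it appears in no negative test, then read off the bipartition from an arbitrary accepted pair) matches Algorithm~\ref{alg:nonadaptive:2}, and the complexity accounting agrees. The only difference is that you re-derive the existence of such a matrix with $O\big(s_{\max}^3\log\frac{n}{s_{\max}}\big)$ rows via the probabilistic method, whereas the paper cites the known bound from~\cite{chen2008upper}.
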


To reduce the number of tests, we propose randomized procedures with non-adaptive, two-stage, and three-stage designs that recover $S_1$, and $S_2$ with high probability, where $|S_1|$ and $|S_2|$ are known in advance.  In Appendix \ref{app:KnownConst}, we explain how all of these results remain unchanged under the milder assumption that these cardinalities are both known to within constant factors, i.e., the same scaling laws are still maintained but with different hidden constants.  The results are summarized here, and the proofs are presented in Section~\ref{sec:rnd}.

\begin{theorem}(Randomized designs)
Consider the ConcGT problem with $m=2$ disjoint nonempty semi-defective subsets $S_1, S_2 \subset [n]$ satisfying $|S_1| = s_1$ and $|S_2| = s_2$. Let $s_{\max} = \max \{ s_1, s_2 \}$ and $s_{\min} = \min\{ s_1, s_2 \}$.  Then:
\begin{itemize}
	\item[(i)] There exists a non-adaptive Monte Carlo randomized algorithm with $O \big( \frac{s_{\max}^2}{s_{\min}} \log{n} \big)$ tests such that $S_1$ and $S_2$ can be found with probability 0.99 in $O \big( \frac{s_{\max}^2}{s_{\min}} \log{n} \big)$ time.
	\item[(ii)] With two stages, there exists a Monte Carlo randomized algorithm with $O(s_{\max} \log{n})$ tests such that $S_1$ and $S_2$ can be found with probability 0.99 in $O(s_{\max} \log{n})$ time. 
	\item[(iii)]  With three stages, there exists a Monte Carlo randomized algorithm with $O(s_{\max} \log{\frac{n}{s_{\max}}})$ tests such that $S_1$ and $S_2$ can be found with probability 0.99 in $O\big( s_{\max} \log^2{\frac{n}{s_{\max}}} \big)$ time.
	\item[(iv)] There exists a Las Vegas design attaining zero error probability with an \emph{average} number of stages $3+\eta$ for arbitrarily small $\eta > 0$, and an \emph{average} number of tests and runtime given by $O(s_{\max} \log{\frac{n}{s_{\max}}})$ and $O\big( s_{\max} \log^2{\frac{n}{s_{\max}}} \big)$ respectively.
\end{itemize}
\label{thm:rnd}
\end{theorem}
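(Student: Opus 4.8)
The plan is to reduce, in all four cases, to standard (noiseless) group testing via a single structural observation: if we possess a \emph{one-sided helper} set $H$ with $H\cap S_1=\emptyset$ and $H\cap S_2\ne\emptyset$, then a ConcGT test on $T\cup H$ is positive if and only if $T\cap S_1\ne\emptyset$, i.e.\ it is an honest standard group testing measurement for $S_1$; symmetrically, a helper missing $S_2$ turns ConcGT tests into standard measurements for $S_2$. Thus, once a pair of complementary one-sided helpers is available, recovering $S_1$ and $S_2$ becomes two independent standard GT problems on $\le s_1$ and $\le s_2$ defectives, and I would simply invoke known schemes: a one-stage randomized scheme with $O(s_i\log n)$ tests and near-linear decoding for part (ii), and the two-stage scheme of de Bonis et al.~\cite{de2005optimal} with $O(s_i\log\frac{n}{s_i})$ tests and $O(s_i\log^2\frac{n}{s_i})$ decoding for parts (iii)--(iv).

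The bootstrapping step --- producing \emph{certified} one-sided helpers --- is the crux, and I would do it with a randomized ``triple'' gadget. Draw a random set $R$ including each item independently with probability $p=\Theta(1/s_{\max})$, randomly $2$-colour the elements of $R$ into $R=R_{\mathrm r}\sqcup R_{\mathrm b}$, and perform the three tests on $R$, $R_{\mathrm r}$, and $R_{\mathrm b}$. If the outcomes are $(+,-,-)$ then, since $R$ meets both semi-defective sets while neither half meets both, one half misses $S_1$ (and meets $S_2$) and the other misses $S_2$ (and meets $S_1$); hence $\{R_{\mathrm r},R_{\mathrm b}\}$ is exactly the helper pair we need, and --- crucially for the Las Vegas variant --- this is a \emph{logical} consequence of the observed outcomes, not a probabilistic one. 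A short computation (using that $R\cap S_1$ and $R\cap S_2$ are independent binomials and that a random $2$-colouring splits a set of $k_1+k_2$ marked points into monochromatic halves with probability $2^{1-k_1-k_2}$) shows the $(+,-,-)$ pattern occurs with probability $\Theta(s_{\min}/s_{\max})$ once the constant in $p$ is chosen suitably; hence $O(s_{\max}/s_{\min})$ independent triples produce such a pattern with probability $\ge 0.99$, and with probability $1-\eta$ for any constant $\eta>0$ by inflating the constant.

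Assembling the designs: for (ii), Stage~1 runs the $O(s_{\max}/s_{\min})=O(s_{\max})$ triples, which is within the $O(s_{\max}\log n)$ budget, and Stage~2 runs the one-stage randomized GT scheme through each discovered helper, recovering both sets with $O(s_{\max}\log n)$ tests and time. For (iii), Stages~2--3 instead run the de Bonis two-stage scheme through each helper; since that scheme is deterministic, the only error is Stage~1's small failure probability. For (i), which must be fully non-adaptive, we cannot react to Stage~1's outcomes, so the single design includes, for \emph{each} of the $O(s_{\max}/s_{\min})$ triples, a self-contained one-stage randomized GT sub-design of $O(s_{\max}\log n)$ tests applied through \emph{both} halves of that triple; the decoder locates any triple with pattern $(+,-,-)$ and decodes its two sub-designs, for a total of $O(\frac{s_{\max}^2}{s_{\min}}\log n)$ tests and comparable time. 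For (iv), we run the $3$-stage design of (iii) but, whenever Stage~1 fails to produce a $(+,-,-)$ triple, we restart Stage~1 with fresh randomness; each attempt uses $O(s_{\max})$ tests and succeeds with probability $\ge 1-\eta'$, so the expected number of extra stages is $O(\eta')$ (arbitrarily small), the expected number of tests and the expected runtime remain $O(s_{\max}\log\frac{n}{s_{\max}})$ and $O(s_{\max}\log^2\frac{n}{s_{\max}})$, and correctness is exact because we only ever decode through a certified helper pair and a deterministic GT scheme.

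The main obstacle I anticipate is making the triple analysis uniform over the whole range of $s_1/s_2$ (in particular confirming the $\Theta(s_{\min}/s_{\max})$ probability when $s_{\max}$ is a constant or comparable to $n$) and pinning down a standard one-stage randomized GT primitive that simultaneously achieves $O(s\log n)$ tests and genuinely $O(s\log n)$ decoding time for (i)--(ii); the remaining work is routine accounting of test counts and union bounds over the constant failure probabilities. One minor subtlety to handle throughout: a $(+,-,-)$ triple reveals only the \emph{unordered} pair of helpers, so the recovery must be carried out symmetrically through both halves and the two recovered sets reported up to permutation --- harmless, since Problem~\ref{prob:ConcGT} asks for $S_1,S_2$ only up to permutation, and their known cardinalities $s_1,s_2$ can be used to disambiguate when required.
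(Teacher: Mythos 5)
Your proposal is correct, and it follows the same two-step strategy as the paper (find a certified pair of ``one-sided helper'' sets, then reduce to standard group testing run through those helpers), with the same test-count accounting for all four parts. The one genuinely different ingredient is your bootstrapping gadget. The paper's Lemma~\ref{lem:S1S2} samples two \emph{independent} families of sets $A_1,\dotsc,A_{t_{\cA}}$ and $B_1,\dotsc,B_{t_{\cB}}$ (each of size $n/s$, sampled without replacement) and certifies a good pair by searching over all $t_{\cA}t_{\cB}$ cross-tests for indices with $\test(A_i)=0$, $\test(B_j)=0$, $\test(A_i,B_j)=1$; in the fully non-adaptive case this further requires the ``indicator/reference vector'' machinery to locate the good row at decoding time. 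Your triple gadget $(R,R_{\mathrm r},R_{\mathrm b})$ with Bernoulli$(\Theta(1/s_{\max}))$ sampling and a random $2$-colouring replaces this with a self-certifying unit: the outcome pattern $(+,-,-)$ logically forces one half to meet only $S_1$ and the other to meet only $S_2$, and your probability computation $\Theta(s_{\min}/s_{\max})$ per triple gives the same $O(s_{\max}/s_{\min})$ budget as the paper's $t_{\cA}+t_{\cB}+t_{\cA}t_{\cB}$, so all leading terms match. Your gadget is arguably cleaner (no cross-product of tests, no reference-vector comparison at decoding), at the cost of requiring the joint event in a single trial rather than two separate marginal events; both costs are of the same order. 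The two loose ends you flag are benign: the paper handles the uniformity issue by sampling exactly $n/s$ items without replacement and invoking a hypergeometric lower bound, and the $O(s\log n)$-test, $O(s\log n)$-time one-stage primitive you need for parts (i)--(ii) is exactly the scheme of \cite{cheraghchi2020combinatorial} that the paper uses (the paper also uses its two-stage variant, rather than \cite{de2005optimal}, for parts (iii)--(iv) to get the stated $O\big(s_{\max}\log^2\frac{n}{s_{\max}}\big)$ decoding time).
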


We note that success probability 0.99 can be replaced by any constant in $(0,1)$, whereas when the error probability is $1-\epsilon$ and the dependence on $\epsilon$ is sought, certain terms need to be multiplied by $\log\frac{1}{\epsilon}$.  This is made precise in the proof in Section~\ref{sec:rnd}, but we omit it from the theorem statement.  In the following subsection, we compare Theorem \ref{thm:rnd} to the lower bound in Theorem \ref{thm:lower}, establishing order-optimality in many cases of interest.

\subsection{Comparisons}
\label{sub:intro:cmp}

We summarize our results in Table~\ref{tbl:cmp}; we also provide an expanded version in Appendix~\ref{appx:sec:fullCmp} where we demonstrate the suboptimality of specializing results from hypergraph learning (e.g.,~\cite{chen2009nonadaptive,abasi2019learning}) to our problem.  For example, for non-adaptive deterministic designs with $m=2$, our number of tests is at least a factor $s_{\min}^3$ less than a direct application of~\cite{chen2009nonadaptive}.  Despite this improvement, we see that the degree of optimality of our non-adaptive deterministic designs remains unclear (and an interesting open problem).  In contrast, for deterministic adaptive designs, we match the lower bound when $m$ is constant, and more generally match it to within an additive  $O(m^2 \log{n})$ term.

Next, we consider the upper bounds for randomized algorithms in Theorem \ref{thm:rnd}, and compare them to the lower bound in Theorem \ref{thm:lower}.  We see that even in the non-adaptive case, we have order-optimality in broad cases of interest, namely, when $s_{\max} = \Theta(s_{\min})$ and $\log\frac{n}{s_{\max}} = \Theta(\log n)$.  The former of these is perhaps the less mild assumption, and we see that it can be dropped upon moving to a two-stage design.  With three stages (or $3+\eta$ on average with a Las Vegas design), we additionally obtain the correct logarithmic factor, meaning that we match the lower bound with \emph{neither} of the preceding restrictions.

\begin{table}[]
\centering
\begin{tabular}{|c|c|l|c|c|}
\hline
\begin{tabular}[c]{@{}c@{}}Design\\ type\end{tabular} & No. of stages & \multicolumn{1}{c|}{\begin{tabular}[c]{@{}c@{}}Semi-defective\\ sets\end{tabular}} & \multicolumn{1}{c|}{Scheme} & No. of tests \\
\hline
\multicolumn{1}{|l|}{Arbitrary} & \multicolumn{1}{c|}{Any} & \begin{tabular}[c]{@{}c@{}} $|P_i| \leq s_i$ for \\ $i = 1, \ldots, m$ \end{tabular} & \begin{tabular}[c]{@{}c@{}} Any (Lower bound) \\\textbf{Theorem~\ref{thm:lower}} \end{tabular} & \multicolumn{1}{l|}{\begin{tabular}[c]{@{}l@{}} $\Omega \big( \sum_{i = 1}^m s_i \log{\frac{n}{s_i}} \big)$ if $\sum_{v = 1}^{m} s_v \leq \frac{n}{2}$ \end{tabular}} \\
\hline
\multirow{3}{*}{\begin{tabular}[c]{@{}l@{}} \\ \\ Deterministic \end{tabular}} & 1 & \multirow{2}{*}{\begin{tabular}[c]{@{}l@{}} $|P_1| \leq s_1$ \\ $|P_2| \leq s_2$ \end{tabular}} & \textbf{Theorem~\ref{thm:nonadaptive}} & $O \left( s_{\max}^3 \log{\frac{n}{s_{\max}}} \right)$ \\
\cline{2-2} \cline{4-5} & $\log_2{n}$ & & \multirow{4}{*}{\textbf{Theorem}~\ref{thm:adaptive:multi}} & $O\left( s_1 \log{\frac{n}{s_1}} + s_2 \log{\frac{n}{s_2}} \right)$ \\
\cline{2-3} \cline{5-5} & $\frac{m \log{n}}{1 - \log{2}} + 3$ & \begin{tabular}[c]{@{}c@{}} $|P_i| \leq s_i$ for \\ $i = 1, \ldots, m$ \end{tabular} & & $O\left( m^2 \log{n} + \sum_{i = 1}^m  s_i \log{\frac{n }{s_i} } \right)$ \\
\hline
\multirow{3}{*}{Random} & 1 & \multirow{3}{*}{\begin{tabular}[c]{@{}l@{}} $|P_1| = s_1$ \\ $|P_2|= s_2$ \end{tabular}} & \multirow{3}{*}{\begin{tabular}[c]{@{}l@{}} \textbf{Theorem}~\ref{thm:rnd}\end{tabular}} & \begin{tabular}[c]{@{}l@{}} $O \big( \frac{s_{\max}^2}{s_{\min}} \log n \big)$ \end{tabular} \\ % $O \big( s_{\max}^2 \log{\frac{n}{s_{\max}}} \big)$ \\
\cline{2-2} \cline{5-5} & 2 & & & \begin{tabular}[c]{@{}l@{}} $O( s_{\max} \log{n} )$ \end{tabular} \\
\cline{2-2} \cline{5-5} & 3 (Monte Carlo); avg.~$3+\eta$~(Las Vegas) & & & \begin{tabular}[c]{@{}l@{}} $O \big( s_{\max} \log{\frac{n}{s_{\max}}} \big)$ \end{tabular} \\
\hline
\end{tabular}

\caption{Comparison of proposed schemes with previous ones. Given a population of $n$ items, there are $m \geq 2$ disjoint nonempty semi-defective subsets $S_1, \ldots, S_m \subset [n]$ with $|S_i| \leq s_i$ for $i = 1, 2, \ldots, m$. When considering randomized design, we set $m = 2$, $|S_1| = s_1$, and $|S_2| = s_2$ for our proposed schemes. In the bottom row, $\eta> 0$ is an arbitrarily small constant.}
\label{tbl:cmp}
\end{table}

\subsection{Overview of Techniques} \label{sec:overview}

With the exception of our non-adaptive deterministic design (which is based on a suitable generalization of the \emph{disjunct} property \cite{du2000combinatorial}), our algorithms share the common idea of performing the following two steps:
\begin{itemize}
    \item[(i)] Identify a subset that intersects $S_1$ but not $S_2$, or vice versa (when $m=2$).  For $m > 2$, this generalizes to intersecting exactly one of the sets $S_1,\dotsc,S_m$, and for adaptive designs these ``subsets'' will in fact be singletons.
    \item[(ii)] Observe that when $m=2$, if we perform a series of tests that always contain the subset intersecting $S_1$ (identified in step (i)), then each test will be 1 if and only if the test contains at least one item from $S_2$.  Thus, we have a \emph{standard group testing problem} for identifying $S_2$.  A similar argument applies when the roles of $S_1$ and $S_2$ are reversed, and also when $m > 2$ and a suitable union of $m-1$ subsets is always included in the test.
\end{itemize}
Once step (i) is achieved, step (ii) is often straightforward, since we can use existing group testing algorithms (e.g., non-adaptive, two-stage, or fully adaptive) in a black-box manner.  In the non-adaptive setting, however, difficulties arise from not knowing the subset intersecting $S_1$ only (or $S_2$ only) in advance; we overcome this by considering multiple instances of standard group testing in parallel, and identifying the appropriate one at decoding time.

For step (i) above, we identify the subsets of interest using different methods depending on the amount of adaptivity.  For the deterministic adaptive design, we use a splitting idea that bears some similarity to binary splitting in regular GT \cite{hwang1972method}.  For our randomized designs, the rough idea is to form random subsets of $\{1,\dotsc,n\}$ of suitable size such that the probability of a desired event (e.g., ``the subset contains exactly one item from $S_1 \cup S_2$'') is constant.  By repeating this procedure independently $O\big(\log\frac{1}{\epsilon'}\big)$ times for some $\epsilon' > 0$, the desired event then occurs at least once with probability at least $1-\epsilon'$.  In addition, we design the procedure in a manner that ensures that when the desired subset exists, it can easily be identified.  The details will be given in the subsequent sections.

\section{Deterministic Designs}
\label{sec:det}

\subsection{Non-adaptive design}
\label{sec:det:nonadaptive}

In this subsection, we present the proof of Theorem~\ref{thm:nonadaptive} for deterministic non-adaptive designs.  Recall that $m = 2$ with $|S_1| \leq s_1$ and $|S_2| \leq s_2$. Since the roles of $s_1$ and $s_2$ are symmetric, without loss of generality, we assume $s_1 \leq s_2$, i.e, $s_{\max} = \max \{ s_1, s_2 \} = s_2$ and $s_{\min} = \min \{s_1, s_2 \} = s_1$.

Before going to our proposed algorithm, we first introduce $(n, u, v)$-disjunct matrices, which are a useful combinatorial tool to deal with several models in group testing \cite{du2000combinatorial,chen2008upper}. The support set of a vector $\bX = (x_1, x_2, \ldots, x_p)$ is defined as $\supp(\bX) = \{ j \in \{1, \ldots, p \} \mid x_j \neq 0 \}$, and we let $\cM(:, j)$ denote the $j$th column of matrix $\cM$. A binary matrix $\cM$ is \emph{$(n, u, v)$-disjunct} if for any $u + v$ columns $j_1, j_2, \ldots, j_{u + v}$, we have
\begin{equation}
\left\vert \bigcup_{i = 1}^u \supp( \cM(:, j_i)) \setminus \bigcup_{h = u + 1}^{u + v} \supp(\cM(:, j_h)) \right\vert \geq 1,
\end{equation}
i.e., for any $u + v$ columns there exists at least one row where any $u$ designated columns have 1-entries and the other $v$ columns have 0-entries.

Our algorithm makes use of an $(n, 2, s_{\max})$-disjunct matrix $X$, and we define $t_0(X)$ to be the number of negative tests in which all items in $X$ appear. The decoding algorithm for finding the sets $S_1$ and $S_2$ is described in Algorithm~\ref{alg:nonadaptive:2}.

\begin{algorithm}
\caption{Non-adaptive algorithm for finding two semi-defective sets in ConcGT.}
\label{alg:nonadaptive:2}
\noindent \textbf{Input:} The set of input items $N$, the maximum cardinalities of two semi-defective sets $s_1$ and $s_2$, a $t \times n$ $(n, 2, s_{\max})$-disjunct matrix $X$, where $s_{\max} = \max \{ s_1, s_2 \}$.\\
\textbf{Output:} The two semi-defective sets.

\begin{algorithmic}[1]
\State Let $S$ be the set of all pairs $X$ such that $t_0(X) = 0$. \label{alg:nonadaptive:2:candidates}
\State Pick arbitrarily a pair $(x_1, x_2) \in S$. Return $\hat{S}_1 = \{ x \in N \mid (x, x_2) \in S \}$ and $\hat{S}_2 = \{ x \in N \mid (x_1, x) \in S \}$. \label{alg:nonadaptive:2:separation}
\end{algorithmic}
\end{algorithm}

\subsubsection{Correctness}
\label{sec:det:Correctness}

Since the input matrix is $(n, 2, s_{\max})$-disjunct, for any two indices $x_1$ and $x_2$ such that $(x_1, x_2)$ is not a defective pair, there exists a row such that $x_1$ and $x_2$ belong to that row and that row has empty intersection with at least one of $S_1$ and $S_2$. Therefore, the test on that row returns negative. This implies that $t_0( \{x_1, x_2 \} ) \geq 1$ and that test reveals the non-defective pair $(x_1, x_2)$. On the other hand, any test containing a defective pair returns positive, i.e., $t_0( \{x_1, x_2 \} ) = 0$. Step~\ref{alg:nonadaptive:2:candidates} therefore returns the set $S$ of all defective pairs.

Since a test on a subset of items returns positive if the subset contains at least one item in each semi-defective set and $S$ contains all defective pairs, Step~\ref{alg:nonadaptive:2:separation} simply separates all items in $S$ into two subsets which are also the two semi-defective sets.

\subsubsection{Complexity}
\label{sec:det:Complexity}

The number of rows (tests) of an $(n, 2, s_{\max})$-disjunct matrix can be bounded~\cite{chen2008upper} by
\begin{equation}
t^* = \left( \frac{s_{\max} + 2}{2} \right)^2 \left( \frac{s_{\max} + 2}{s_{\max}} \right)^{s_{\max}} \left(1 + (s_{\max} + 2) \left( 1 + \log{\left( \frac{n}{s_{\max} + 2} + 1 \right)} \right) \right) = O \left( s_{\max}^3 \log{\frac{n}{s_{\max}}} \right).
\end{equation} 

Since there are $\binom{n}{2} = O(n^2)$ pairs, it takes at most $O(t^* n^2) = O \big( s_{\max}^3 n^2 \log{\frac{n}{s_{\max}}} \big)$ time to execute Step~\ref{alg:nonadaptive:2:candidates}. Meanwhile, since $|S_1| \leq s_1$ and $|S_2| \leq s_2$, there are up to $s_1 s_2$ defective pairs, and hence $|S| \leq s_1 s_2$. Consequently, it takes at most $2 s_1 s_2$ time to complete Step~\ref{alg:nonadaptive:2:separation}. Therefore, the decoding complexity of Algorithm~\ref{alg:nonadaptive:2} is
\begin{equation}
O \left( s_{\max}^3 n^2 \log{\frac{n}{s_{\max}}} \right) + O(s_1 s_2) = O \left( s_{\max}^3 n^2 \log{\frac{n}{s_{\max}}} \right).
\end{equation}

\subsection{Adaptive design}
\label{sec:det:adaptive}

In this section, we present the proof of Theorem~\ref{thm:adaptive:multi} for deterministic adaptive designs.  We consider a multi-stage approach, in which the design of a test may now depend on the test results from the previous stages. Here we consider the general case of $m \ge 2$.

The general idea behind our adaptive design follows the outline given in Section \ref{sec:overview}: We identify a defective tuple $D_0 = \{ d_1, d_2, \ldots, d_m \}$, and then one-by-one for each $i=1,\dotsc,m$, we perform a sequence of tests that all contain $D_0 \setminus \{d_i\}$, allowing us to identify $S_i$ via standard group testing techniques.  For the latter step, in general notion, we let $\mathrm{Adaptive}(\leq u, A, B )$ denote an adaptive strategy that returns a semi-defective set with the size of up to $u$ in $A$; every test in this procedure contains a subset of items of $A$ and all items in $B$ (which represents $D_0 \setminus \{d_i\}$ in the above discussion).  For convenience, we will specifically use the 2-stage design for standard group testing in \cite[Corollary 3]{de2005optimal}, as it comes with a convenient non-asymptotic bound on the number of tests of $7.54 u \log_2{\frac{|A|}{u}} + 16.21 u - 2 \mathrm{e} - 1 = O \big( u \log{\frac{|A|}{u}} \big)$.

In more detail, the first step is to find a set consisting of a single defective tuple, denoted by $D_0 = \{ d_1, d_2, \ldots, d_m \}$. The second step is to use $D_0$ for recovering all semi-defective sets. Specifically, the defective set $S_i$ can be found by setting $S_i = \{ d_i \} \cup \mathrm{Adaptive}(\leq s_i - 1, N \setminus D_0, D_i )$, where $D_i = D_0 \setminus \{ d_i \}$ for $i = 1, \ldots, m$. The numbers of stages and tests for finding $S_i$ are two and up to $7.54 (s_i - 1) \log_2{\frac{n - m - \sum_{v = 1}^{i - 1} s_i}{s_i - 1}} + 16.21 (s_i - 1) - 2 \mathrm{e} - 1$, respectively. By running these 2-stage standard GT subroutines for $i=1,\dotsc,m$ in parallel, the number of stages in the second step is $2$ and the number of tests is up to
\begin{equation}
\sum_{i = 1}^m \bigg( 7.54 (s_i - 1) \log_2{\frac{n - m}{s_i}} + 16.21 (s_i - 1) - 2 \mathrm{e} - 1 \bigg) = O \left( \sum_{i = 1}^m  s_i \log{\frac{n}{s_i}} \right). \label{eqn:totalAdaptive:tests}
\end{equation}
Since the designs and their analyses in the second step are directly from~\cite[Corollary 3]{de2005optimal}, we only need to design and analyze the first step.

\subsubsection{The case $m = 2$}
\label{sub:adaptive:2}

We present an algorithm to tackle the case $m = 2$ in Algorithm~\ref{alg:Adaptive:2}.  We wish to show that a defective pair is returned in Step~\ref{alg:Adaptive:2:Step1}.  
To see this, we first argue that in Step~\ref{alg:Adaptive:2:Init}b, there must exist a test that returns positive: If either $A_{01}$ or $A_{02}$ contains items from both sets $S_1$ and $S_2$, the test on that subset returns positive. Otherwise, one must contain $S_1$ and the other contains $S_2$.  In this scenario, without loss of generality, assume that $S_1$ and $S_2$ are subsets of $A_{01}$ and $A_{02}$, respectively. By dividing $A_{01}$ into two halves $A_{11}$ and $A_{12}$, there must exist a half that contains at least one item in $S_1$. Similarly, there must exist a half of $A_{02}$ that contains at least one item in $S_2$. Consequently, there exists a subset among $A_{11} \cup A_{21}, A_{11} \cup A_{22}, A_{12} \cup A_{21}, A_{12} \cup A_{22}$ such that the test on it returns positive.

\begin{algorithm}
\caption{Adaptive algorithm for finding two semi-defective sets in ConcGT.}
\label{alg:Adaptive:2}
\noindent \textbf{Input:} The set of input items $N$, the cardinalities of two semi-defective sets $S_1$ and $S_2$, namely $s_1$ and $s_2$.\\
\textbf{Output:} Two semi-defective sets.

\begin{algorithmic}[1]
\State Find an item in $S_1$ and an item in $S_2$. \label{alg:Adaptive:2:Step1}
	\begin{algsubstates}
	\State Let $A_{01}$ and $A_{02}$ be the first half and the second half of the set $A$ (initialized to $A = N$). Let $A_{11}$ and $A_{12}$ be the first half and the second half of $A_{01}$. Similarly, let $A_{21}$ and $A_{22}$ be the first half and the second half of $A_{02}$. \label{alg:Adaptive:2:Init}
	\State Create 6 tests on the subsets $A_{01}, A_{02}, A_{11} \cup A_{21}, A_{11} \cup A_{22}, A_{12} \cup A_{21}, A_{12} \cup A_{22}$. \label{alg:Adaptive:2:Divide}
	\State Replace $A$ in Step~\ref{alg:Adaptive:2:Init}a by a subset that returns positive in Step~\ref{alg:Adaptive:2:Divide}b. \label{alg:Adaptive:2:Pick}
    \State Repeat Steps~\ref{alg:Adaptive:2:Init}a to~\ref{alg:Adaptive:2:Stop}d until the subset that returns positive in Step~\ref{alg:Adaptive:2:Divide}b has size 2. \label{alg:Adaptive:2:Stop}
    \end{algsubstates}
\State Recover $S_1$ and $S_2$: Let $a_1$ and $a_2$ be the two items returned in the previous step, and estimate $\hat{S}_i = \{ a_i \} \cup \mathrm{Adaptive}(\leq s_i - 1, N \setminus \{ a_1, a_2 \}, \{ a_{i^c} \} )$ for $i = 1, 2$, where $a_{1^c} := a_2$ and $a_{2^c} := a_1$. \label{alg:Adaptive:2:Step2}
\end{algorithmic}
\end{algorithm}

We now move to the complexity analysis. Each subset in Step~\ref{alg:Adaptive:2:Divide}b has size at least $\ceil{|A|/2}$ and at most $\floor{|A|/2} + 1$. Since Steps~\ref{alg:Adaptive:2:Init}a to~\ref{alg:Adaptive:2:Stop}d are repeated until the size of $|A|$ is 3 or 4, i.e., the stop condition in Step~\ref{alg:Adaptive:2:Stop}d) holds, there are up to $\ceil{\log_2{n}} - 2$ stages and $6(\ceil{\log_2{n}} - 2)$ tests in total for the first step. As analyzed in the beginning of Section~\ref{sec:det:adaptive}, in the second step, the number of stages is $2$ and the number of tests is
\begin{equation}
\sum_{i = 1}^2  7.54 (s_i - 1) \log_2{\frac{n - m}{s_i - 1}} + 16.21 (s_i - 1) - 2 \mathrm{e} - 1.
\end{equation}
Therefore, there are up to $\log_2{n} - 2 + 2 = \log_2{n}$ stages and 
\begin{align}
& 6\ceil{\log_2{n}} - 12 + \sum_{i = 1}^2 \left(  7.54 (s_i - 1) \log_2{\frac{n - m}{s_i - 1}} + 16.21 (s_i - 1) - 2 \mathrm{e} - 1 \right) \\
&= O \left( \sum_{i = 1}^2  s_i \log{\frac{n}{s_i}} \right)
\end{align}
tests in total.

\subsubsection{The case $m \geq 2$}
\label{sub:adaptive:multi}

For the case $m \geq 2$, we present Algorithm~\ref{alg:Adaptive:multi}.  This algorithm is in fact slightly simpler to describe than Algorithm \ref{alg:Adaptive:2}, but it requires more stages of adaptivity (even when $m=2$);  hence, we found it useful to provide both.  
Similar to the arguments in Section~\ref{sub:adaptive:2}, since the correctness of Step~\ref{alg:Adaptive:m:Step2} is already proved in the beginning of Section~\ref{sec:det:adaptive}, we only need to prove the correctness of the first step in Algorithm~\ref{alg:Adaptive:multi}.

\begin{algorithm}
\caption{Adaptive algorithm for finding $m$ semi-defective sets in ConcGT.}
\label{alg:Adaptive:multi}
\noindent \textbf{Input:} The set of input items $N$, the cardinalities of $m$ semi-defective sets $s_1, s_2, \ldots, s_m$.\\
\textbf{Output:} The $m$ semi-defective sets.

\begin{algorithmic}[1]
\State Find an item in each $S_i$ for $i = 1, 2, \ldots, m$. \label{alg:Adaptive:m:Step1}
	\begin{algsubstates}
	\State Divide $A = N$ into $m + 1$ subsets of equal size (or up to a difference of $1$ due to rounding). Test the $m + 1$ pools consisting of $m$ subsets of these subsets.  \label{alg:Adaptive:m:Init}
	\State Set $A$ to be a subset that returns positive in Step~\ref{alg:Adaptive:m:Init}a. \label{alg:Adaptive:m:Pick}
    \State Repeat Steps~\ref{alg:Adaptive:m:Init}a to~\ref{alg:Adaptive:m:Stop}c until the subset returns positive in Step~\ref{alg:Adaptive:m:Pick}b has size of $m$. \label{alg:Adaptive:m:Stop}
    \end{algsubstates}
\State Recover $S_i$ for $i = 1, 2, \ldots, m$. \label{alg:Adaptive:m:Step2}
	\begin{algsubstates}
	\State Let $D_0 = \{d_1, \ldots, d_m \}$ be the subset returned once the first step stops. 
	\State Set $\hat{S}_i = \{ d_i \} \cup \mathrm{Adaptive}(\leq s_i - 1, N \setminus D_0, D_0 \setminus \{ d_i \} )$ for $i = 1, 2, \ldots, m$.
	\end{algsubstates}
\end{algorithmic}
\end{algorithm}

The procedure in the first step is inspired by the adaptive procedure in threshold group testing~\cite{damaschke2006threshold}.  We claim that there must exist a pool that returns positive in Step~\ref{alg:Adaptive:m:Step1}a. Indeed, on the first iteration with $A=N$, there are $m$ semi-defective sets $S_1, \ldots, S_m$ and they are distributed into $m + 1$ subsets of $N$, so there must exist a union of $m$ pools such that the union contains items from all of $S_1,\dotsc,S_m$, based on the pigeonhole principle.  Since the algorithm does not discard any entire semi-defective set in Step~\ref{alg:Adaptive:m:Step1}b, the same reasoning applies on subsequent iterations with $A \subset N$.

Algorithm~\ref{alg:Adaptive:multi} requires $\log_{(m + 1)/m}{n} + 1 \leq \frac{m \log{n}}{1 + \log{(1 - 1/m)}} + 1 \leq \frac{m \log{n}}{1 - \log{2}} + 1$ stages and $(m + 1)\log_{(m + 1)/m}{n} + 1 \leq \frac{m (m + 1) \log{n}}{1 + \log{(1 - 1/m)}} + 1 \leq \frac{m(m + 1) \log{n}}{1 - \log{2}} + 1$ tests to execute the first step. Then, by using the analysis in the beginning of Section~\ref{sec:det:adaptive}, the total number of stages is up to $\frac{m \log{n}}{1 - \log{2}} + 1 + 2 = \frac{m \log{n}}{1 - \log{2}} + 3$, and the total number of tests is up to
\begin{align}
&\frac{m(m + 1) \log{n}}{1 - \log{2}} + 1 + \sum_{i = 1}^m \left(  7.54 (s_i - 1) \log_2{\frac{n - m}{s_i - 1}} + 16.21 (s_i - 1) - 2 \mathrm{e} - 1 \right) \\
&= O\left( m^2 \log{n} + \sum_{i = 1}^m  s_i \log{\frac{n}{s_i}} \right).
\end{align}

\section{Randomized Designs}
\label{sec:rnd}

In this section, we propose Monte Carlo randomized designs in the categories of non-adaptive, two-stage, and three stage, such that the two semi-defective sets $S_1$ and $S_2$ can be recovered with high probability.  We also provide a Las Vegas design with $3+\eta$ (e.g., $3.01$) stages \emph{on average}, which follows easily from the 3-stage Monte Carlo design.  
The results are summarized in Theorem~\ref{thm:rnd}.  All of our designs in this section are broadly based on the outline given in Section \ref{sec:overview}, in which we first identify subsets that intersect $S_1$ but not $S_2$ (or vice versa), and then apply standard group testing techniques.

\textbf{Notations:} Let the cardinalities of two semi-defective sets $S_1$ and $S_2$ be $s_1$ and $s_2$, respectively.  We assume that these are exactly specified and known to the test designer throughout this section, but in Appendix \ref{app:KnownConst} we relax this requirement to only requiring knowledge of their cardinalities to within a constant factor.

Set $S = S_1 \cup S_2$, $S_0 = [n] \setminus S$ and $s_0 = |S_0| = n - s_1 - s_2$. Let $\test(X)$ be a test on the subset $X$. We introduce the following notation for testing with respect to a given matrix $\cM$ but with the items from a given set $A$ additionally included in every test:
\begin{equation}
\test(\cM, A) = \begin{bmatrix}
\test(\supp(\cM(1, :) \cup A )) \\
\vdots \\
\test(\supp(\cM(t, :) \cup A )) \\
\end{bmatrix}.
\end{equation}
%That is, we test according to $\cM$ but additionally always include every item from $A$.  
This means that test $i$, i.e., $\test(\supp(\cM(i, :) \cup A )$, is positive if $|\supp(\cM(i, :) \cup A) \cap S_1 | \geq 1$ and $|\supp(\cM(i, :) \cup A) \cap S_2 | \geq 1$, and is negative otherwise. When $A = \emptyset$, we simply recover $\test(\cM, \emptyset) = \test(\cM)$.

Let $\bX_i = (x_1^i, \ldots, x_n^i)$ be the representation vector of the input set $N$ with respect to the semi-defective set $S_i$, where $x_j^i = 1$ if $j \in S_i$ and $x_j^i = 0$ otherwise. Then, we get
\begin{equation}
\test(\cM, A) = \begin{cases}
\mathbb{1}, \qquad \quad \ \text{if } A \cap S_1 \neq \emptyset \text{ and } A \cap S_2 \neq \emptyset. \\
\cM \odot \bX_1, \ \text{ if } A \cap S_1 = \emptyset \text{ and } A \cap S_2 \neq \emptyset. \\ % with $S_1$ is the defective set
\cM \odot \bX_2, \ \text{ if } A \cap S_1 \neq \emptyset \text{ and } A \cap S_2 = \emptyset. \\ % with $S_2$ is the defective set
\test(\cM), \ \text{if } A \cap S_1 = \emptyset \text{ and } A \cap S_2 = \emptyset,
\end{cases}
\label{eqn:classicalConversion}
\end{equation}
where $\cM \odot \bX$ denotes \emph{standard group testing} on a measurement matrix $\cM$ with defectivity indicator vector $\bX$, i.e.:
\begin{itemize}
    \item If the $i$-th row of $\cM$ contains a $1$ in any entry where $\bX$ is $1$, then the $i$-th entry of $\cM \odot \bX$ is $1$;
    \item If the $i$-th row of $\cM$ contains a $0$ in all entries where $\bX$ is $1$, then the $i$-th entry of $\cM \odot \bX$ is $0$.
\end{itemize}

Let $\mathrm{Decode}(\bZ = \cM \odot \bX, \cM)$ be a standard GT decoding procedure for recovering a vector $\bX$ with $|\bX| \leq s_{\max}$, where $s_{\max} = \max\{s_1, s_2 \}$.  We will use different such procedures depending on the number of stages we are targeting, and in some cases we will allow some error probability in this procedure.

\textbf{Random sampling subroutine:} Here we present an important building block of our randomized designs. Following the high-level ideal outlined in Section \ref{sec:overview}, to identify $S_1$, we take the following approach:
\begin{enumerate}
\item Find a subset $A$ of $N$ such that it contains at least one item in $S_2$ and none in $S_1$.
\item Run $\mathrm{Decode}(\cM \odot \bX_1 = \test(\cM, A), \cM)$ which is a non-adaptive strategy that returns a semi-defective set in $N$ with size up to $s_1$.
\end{enumerate}
Note that the equality $\cM \odot \bX_1 = \test(\cM, A)$ in Step 2 follows directly from the second case in \eqref{eqn:classicalConversion}.  Similarly, to identify $S_2$, one first finds a subset $B$ of $N$ such that it contains at least one item in $S_1$ and none in $S_2$. Then we call $\mathrm{Decode}(\cM \odot \bX_2 = \test(\cM, A), \cM)$ to get $S_2$.

Since the procedure $\mathrm{Decode}(\bZ = \cM \odot \bX, \cM)$ for $|\bX| \leq s_{\max}$ can easily be characterized using known group testing results, our main task is to ensure that finding $A$ (or $B$) satisfies the above requirements, and determine how to execute both steps simultaneously to attain a non-adaptive strategy. The following lemma establishes the requirement of finding $A$ and $B$, and is proved in Appendix~\ref{appx:sec:lem}.

\begin{lemma}
Given a population of $n$ items, suppose that there are two disjoint nonempty subsets $S_1$ and $S_2$ with $|S_1| = s_1$ and $|S_2| = s_2$. Let $s = s_1 + s_2$. Consider sampling $n/s$ items without replacement $t_i$ times for $i = 1, 2$. Given a precision parameter $\epsilon_i > 0$, the probability that there exists a trial (among the $t_i$ performed) with one item from $S_i$ and none from $S_{i^c}$ (i.e., the other set in $\{S_1,S_2\}$) is at least $1-\epsilon_i$ provided that $t_i \geq \frac{\mathrm{e}^5}{4 \pi^2} \log{\frac{1}{\epsilon_i}} \cdot \frac{s}{s_i}$.
\label{lem:S1S2}
\end{lemma}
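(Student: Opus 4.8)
The plan is to reduce the claim to a single–trial probability estimate combined with a standard boosting-by-independence argument, following the outline in Section~\ref{sec:overview}. Fix $i\in\{1,2\}$ and let $j$ denote the other index, so $S_{i^c}=S_j$ and $s=s_i+s_j$. In one trial we draw $k:=n/s$ items uniformly at random without replacement from $[n]$, and we call the trial a \emph{success} if the drawn set contains at least one item of $S_i$ and no item of $S_j$. A success is implied by the more restrictive event that exactly one drawn item lies in $S_i$ while the remaining $k-1$ lie in $[n]\setminus(S_1\cup S_2)$; the probability of this event is $s_i\binom{n-s}{k-1}/\binom{n}{k}$, since there are $s_i$ choices for the element of $S_i$ and $\binom{n-s}{k-1}$ choices for the rest. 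Writing $\binom{n}{k}=\tfrac{n}{k}\binom{n-1}{k-1}$ and using $k=n/s$, the single–trial success probability $p_i$ therefore satisfies $p_i\ge\frac{s_i}{s}\,R$, where $R:=\binom{n-s}{k-1}/\binom{n-1}{k-1}=\prod_{\ell=1}^{k-1}\bigl(1-\tfrac{s-1}{n-\ell}\bigr)$.

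The crux of the proof — and the step I expect to be the only real obstacle — is to lower bound $R$ by a universal constant, namely $R\ge\frac{4\pi^2}{e^5}$ for all admissible $(n,s)$, which gives $p_i\ge\frac{4\pi^2}{e^5}\cdot\frac{s_i}{s}$. I would obtain this by applying Stirling's formula with its explicit error bounds to the four factorials appearing in $R$: the $\sqrt{2\pi\,(\cdot)}$ prefactors collect into the $(2\pi)^2=4\pi^2$ in the denominator, the leading $(\cdot/e)^{(\cdot)}$ terms cancel up to a factor bounded below uniformly in $n$ and $s$, and the $\exp(O(1/(12\cdot)))$ remainders are absorbed into the $e^{-5}$. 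A cleaner but slightly weaker route, useful as a sanity check, is to note that $1-\tfrac{s-1}{n-\ell}\ge 1-\tfrac{s}{n}$ for every $\ell\le k-1=\tfrac{n}{s}-1$, hence $R\ge(1-\tfrac{s}{n})^{n/s}\ge e^{-1}>\tfrac{4\pi^2}{e^5}$ whenever $n/s$ is not too small, with the finitely many remaining values of $k=n/s$ handled by direct computation. Making this bound hold uniformly and with the precise stated constant is the part that requires care; everything else is routine.

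Finally, since the $t_i$ trials are independent, the probability that \emph{none} of them is a success is $(1-p_i)^{t_i}\le e^{-p_i t_i}\le\exp\!\bigl(-\tfrac{4\pi^2}{e^5}\cdot\tfrac{s_i}{s}\,t_i\bigr)$, which is at most $\epsilon_i$ precisely when $t_i\ge\frac{e^5}{4\pi^2}\cdot\frac{s}{s_i}\ln\frac{1}{\epsilon_i}$ (here $\log$ is the natural logarithm; taking base $2$ only strengthens the hypothesis). This is exactly the stated condition, so with probability at least $1-\epsilon_i$ at least one of the $t_i$ trials yields a subset containing one item of $S_i$ and none of $S_{i^c}$, which completes the proof.
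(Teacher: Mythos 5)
Your proof is correct and structurally identical to the paper's: both reduce the claim to the single-trial event ``exactly one drawn item lies in $S_i$ and the remaining $k-1$ lie outside $S_1\cup S_2$,'' whose probability is $s_i\binom{n-s}{k-1}/\binom{n}{k}=\frac{s_i}{s}\cdot\frac{4\pi^2}{\mathrm{e}^5}$-ish, and then boost over $t_i$ independent trials via $(1-p)^{t_i}\le \mathrm{e}^{-p t_i}\le\epsilon_i$. The one place you diverge is the step you correctly identify as the crux: the paper does not prove the constant lower bound at all, but simply cites \cite{chan2013stochastic} (Lemma 4 there) for the hypergeometric estimate $H(n,n/s,s,1)\ge \frac{4\pi^2}{\mathrm{e}^5}$ and multiplies by the conditional probability $\binom{s_1}{1}/\binom{s}{1}=s_1/s$, whereas you propose to derive the constant from scratch. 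Your elementary route is the right one and in fact gives a better constant than the paper's, but it has a small off-by-one slip worth fixing: the product $R=\prod_{\ell=1}^{k-1}\bigl(1-\frac{s-1}{n-\ell}\bigr)$ has $k-1$ factors, so the bound is $R\ge(1-\frac{s}{n})^{k-1}=(1-\frac{s}{n})^{n/s-1}\ge \mathrm{e}^{-1}$, which holds for \emph{all} $n\ge s$ (since $u\ln u\ge u-1$ for $u\in(0,1]$ with $u=1-s/n$), so no exceptional cases need direct computation; by contrast the inequality $(1-\frac{s}{n})^{n/s}\ge \mathrm{e}^{-1}$ as you wrote it is false for every value of $n/s$ (the left side approaches $\mathrm{e}^{-1}$ from below). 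Since $\mathrm{e}^{-1}>\frac{4\pi^2}{\mathrm{e}^5}$, your self-contained argument yields $p_i\ge \mathrm{e}^{-1}\frac{s_i}{s}$ and hence the stated threshold on $t_i$ a fortiori; the Stirling sketch is unnecessary and, as outlined, would not produce the $4\pi^2$ factor in the way you describe, so I would drop it in favor of the product bound.
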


We now proceed with the details of our non-adaptive design, and then our 2-stage, 3-stage, and Las Vegas designs.

\subsection{Non-adaptive design}
\label{sec:rnd:1Stage}

\subsubsection{Encoding procedure}
\label{sec:rnd:1Stage:Enc}

There are two phases, which we call initialization and testing, illustrated in Fig~\ref{fig:Enc_Rand}. The testing phase consists of three sub-phases to generate three types of vectors: indicator vectors, reference vectors, and identification vectors.  Note that the reference vector can be interpreted as creating $t_{\cA} t_{\cB}$ tests $\test(A_i, B_j)$, where $A_i = \supp(\cA(i, :))$ and $B_j = \supp(\cB(j, :))$ for $i \in [t_{\cA}]$ and $j \in [t_{\cB}]$, as depicted via the double-arrow in Fig.~\ref{fig:Enc_Rand}. 

The indicator vectors, which are $\bY_{\cA}$ and $\bY_{\cB}$, are first used to identify tests in $\cA$ and $\cB$ that contain at least one item from \emph{each} (i.e., both) of $S_1$ and $S_2$.  Such tests are not used by our algorithm, so they are removed from consideration.  After doing so, our next task is to eliminate tests in $\cA$ and $\cB$ that contain no items from $S_1 \cup S_2$. This is done by using a reference vector $\bR_{\cA}$ as described below. After these two rounds of eliminating tests, the remaining tests in $\cA$ and $\cB$ must contain only items in $S_1$ or only items in $S_2$. By using these tests with the identification vectors $\bF_{\cA}$ and $\bF_{\cB}$, one can recover $S_1$ and $S_2$ as described below.

We now proceed to describe the choices of the relevant parameters, matrices, etc.  
Let $\epsilon_1$ and $\epsilon_2$ be positive precision constants. By using Lemma~\ref{lem:S1S2}, we can let $\cA$ and $\cB$ be two binary matrices of size $t_{\cA} \times n$ and $t_{\cB} \times n$ with $t_{\cA} = t_1$ and $t_{\cB} = t_2$, where each row contains $n/s$ items from $n$ items, respectively.

Let $\cM$ be an $m \times n$ matrix such that for any $\bX \in \{0 ,1 \}^n$ and $|\bX| \leq s_{\max}$, given $\bY = \cM \odot \bX$, $\bX$ can be recovered in time $O(m)$ with high probability. Such a matrix can be randomly obtained with $m = O(s_{\max} \log{n})$ and error probability \cite{cheraghchi2020combinatorial}
\begin{equation}
    \epsilon_0 = \mathrm{e}^{-\Omega(s_{\max}) } + \frac{1}{\poly(n)}. \label{eq:eps0}
\end{equation}
In particular, in the case that $s_{\max} \to \infty$ as $n \to \infty$, we have $\epsilon_0 \to 0$, and thus $\epsilon_0 < \min\{\epsilon_1,\epsilon_2\}$ when $n$ is large enough (note that $\epsilon_1$ and $\epsilon_2$ are constant).  If $s_{\max} = O(1)$, then the result in~\cite{cheraghchi2020combinatorial} does not give $\epsilon_0 \to 0$, but the proof still readily shows that $\epsilon_0$ can be made arbitrarily small by using a large enough implied constant in $m = O(s_{\max} \log{n})$, so we still can still ensure $\epsilon_0 \le \min\{\epsilon_1,\epsilon_2\}$.\footnote{Alternatively, we could use the scheme proposed in \cite[Appendix B]{bondorf2020sublinear} that attains $\epsilon_0 \to 0$, at the expense of increasing the decoding time from $O(\log n)$ to $O(\log ^2 n)$.}  Subsequently, we set $\epsilon = \frac{1}{4}\max\{\epsilon_0, \epsilon_1, \epsilon_2 \}$, so that $2\epsilon_0 + \epsilon_1 + \epsilon_2 \le \epsilon$ (here the factor of $2$ is due to decoding twice, once for each of $S_1$ and $S_2$).

\begin{figure}
\centering
1\includegraphics[scale=0.5]{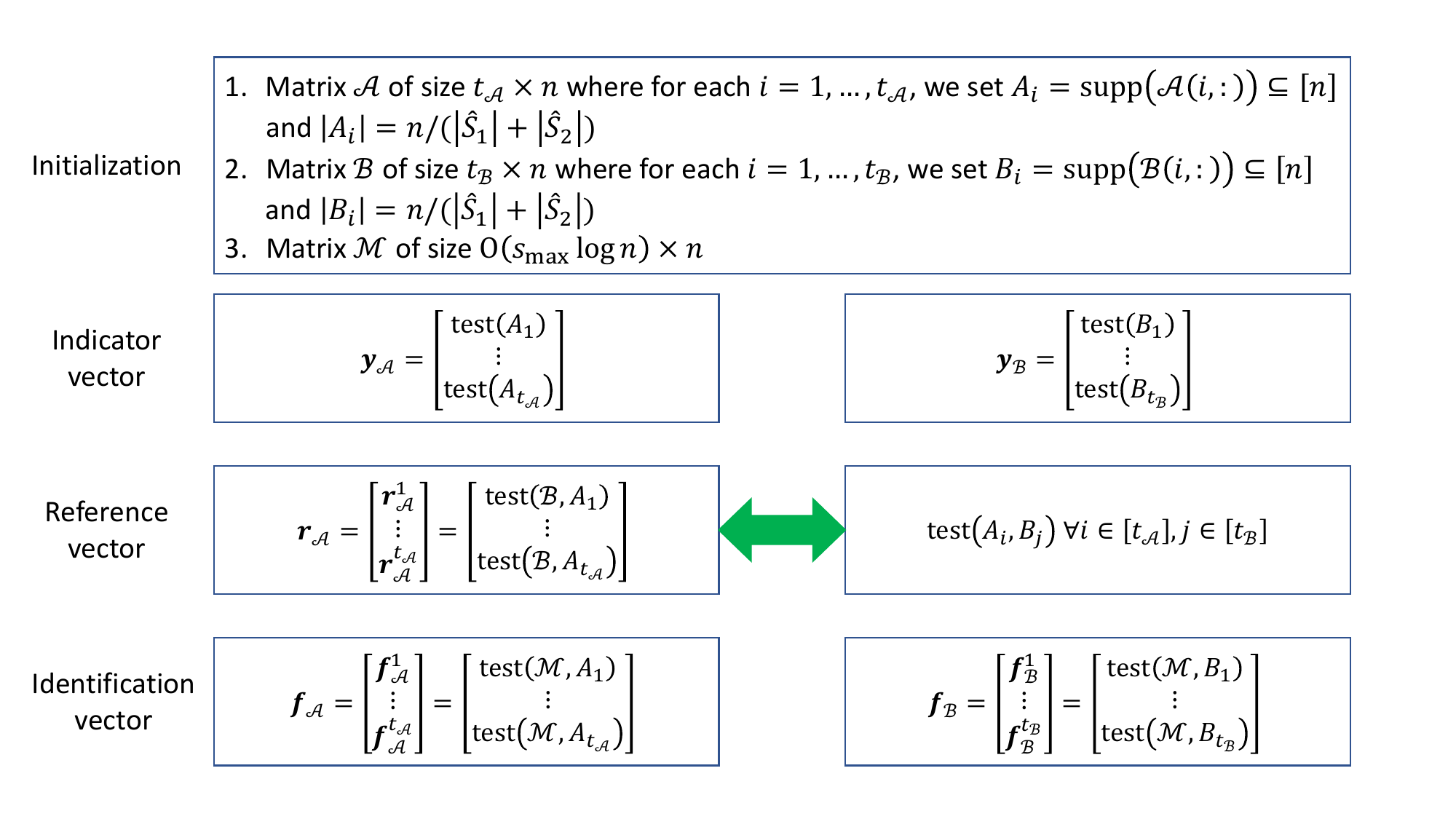}
\caption{Encoding procedure for the non-adaptive and randomized design. There are multiple phases that produce three types of vectors: indicator vectors, reference vectors, and identification vectors. The rows of $\cA$ and $\cB$ are generated using uniform sampling without replacement.}
\label{fig:Enc_Rand}
\end{figure}

\subsubsection{Decoding and correctness}
\label{sec:rnd:1Stage:DecCorrect}

\begin{figure}
\centering
\includegraphics[scale=0.5]{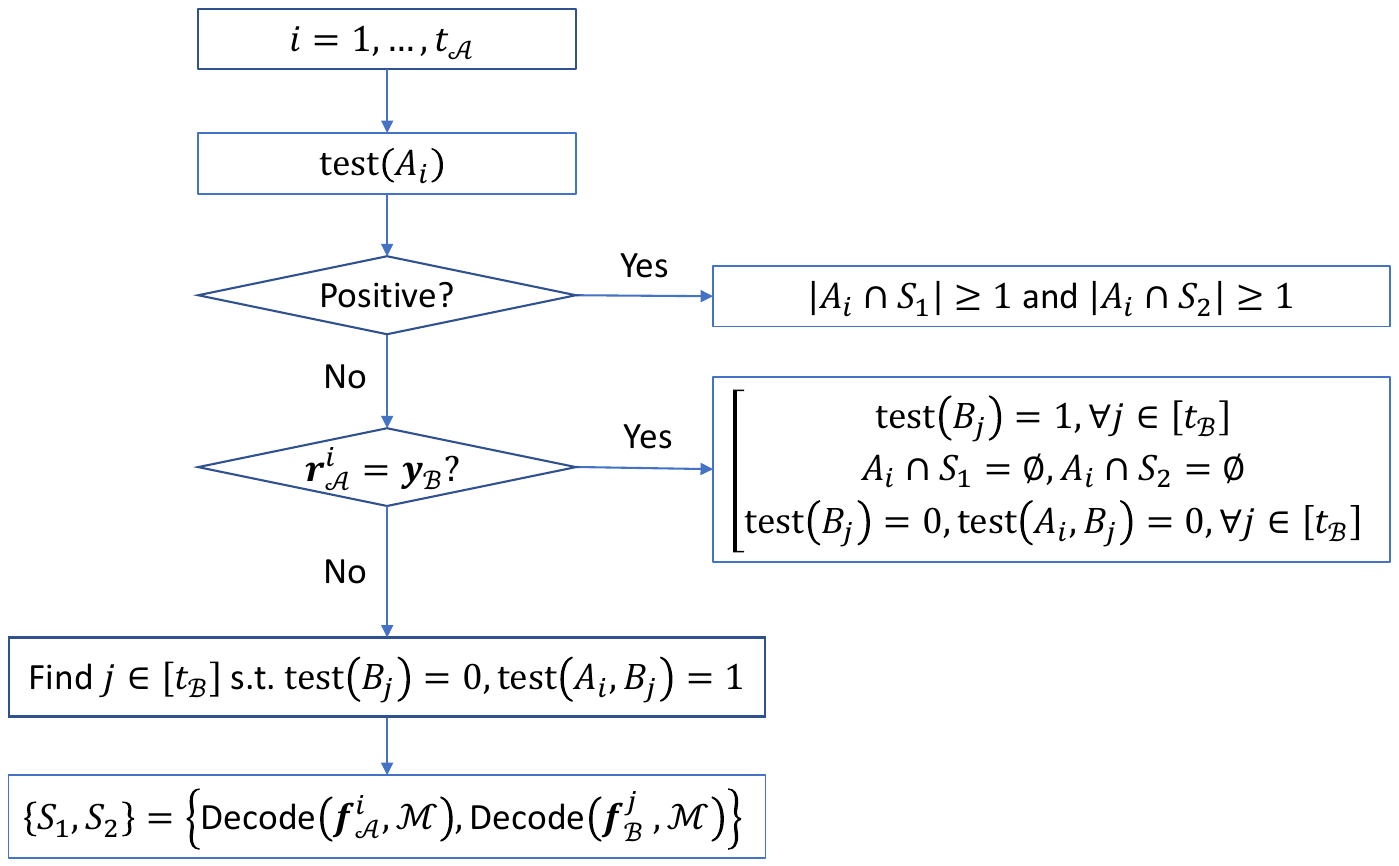}
\caption{Decoding procedure for the non-adaptive and randomized design.}
\label{fig:Dec_Rand}
\end{figure}

The decoding procedure is illustrated in Fig~\ref{fig:Dec_Rand}. We recover $S_1$ and $S_2$ by examining the matrix $\cA$. For every $i = 1, \ldots, t_{\cA}$, we examine whether the test outcome $\test(A_i)$ is positive, where $A_i = \supp(\cA(i, :))$. If the outcome is positive, $A_i$ must contain at least one item from each of $S_1$ and $S_2$. We then skip the current index and move to the next index. On the other hand, when the outcome is negative, there are three possibilities for $A_i$:
\begin{itemize}
\item $A_i \cap S_1 = \emptyset$ and $A_i \cap S_2 = \emptyset$.
\item $A_i \cap S_1 = \emptyset$ and $A_i \cap S_2 \neq \emptyset$.
\item $A_i \cap S_1 \neq \emptyset$ and $A_i \cap S_2 = \emptyset$.
\end{itemize} 
%When $A_i$ does not contain items from both $S_1$ and $S_2$,
As we have outlined, being  in the second and third of these cases will be useful for applying standard group testing.  
To help distinguish between the three cases, we note that the equation $\bR^i_{\cA} = \bY_{\cB}$ (i.e., adding $A_i$ to the tests performed on $\cB$ does not impact any outcomes) occurs if and only if one of the following holds:
\begin{itemize}
\item $A_i \cap S_1 = \emptyset$ and $A_i \cap S_2 = \emptyset$.
\item For all $j = 1, \ldots, t_{\cB}$, we have $B_j \cap S_1 \neq \emptyset$ and $B_j \cap S_2 \neq \emptyset$.
\item $A_i \cap S_1 = \emptyset$, $A_i \cap S_2 \neq \emptyset$,  and for all $j = 1, \ldots, t_{\cB}$ we have $B_j \cap S_1 = \emptyset$ and $B_j \cap S_2 \neq \emptyset$.
\item $A_i \cap S_1 \neq \emptyset$, $A_i \cap S_2 = \emptyset$, and for all $j = 1, \ldots, t_{\cB}$ we have $B_j \cap S_1 \neq \emptyset$ and $B_j \cap S_2 = \emptyset$.
\end{itemize}
Therefore, when $\bR^i_{\cA} \neq \bY_{\cB}$, there are only two possibilities:
\begin{itemize}
\item $A_i \cap S_1 = \emptyset$, $A_i \cap S_2 \neq \emptyset$, and for some $j \in \{ 1, \ldots, t_{\cB} \}$ we have $B_j \cap S_1 \neq \emptyset$ and $B_j \cap S_2 = \emptyset$.
\item $A_i \cap S_1 \neq \emptyset$, $A_i \cap S_2 = \emptyset$, and for some $j \in \{ 1, \ldots, t_{\cB} \}$ we have $B_j \cap S_1 = \emptyset$ and $B_j \cap S_2 \neq \emptyset$.
\end{itemize}
In both of these cases, there exists $j \in [t_{\cB}]$ such that $\test(B_j) = 0$ and $\test(A_i, B_j) = 1$. For the first possibility, because of the construction of $\bF^i_{\cA}$ in Figure \ref{fig:Enc_Rand} combined with $A_i \cap S_1 = \emptyset$ and $A_i \cap S_2 \neq \emptyset$, we must have $\bF^i_{\cA} = \cM \odot \bX_2$ (cf.~\eqref{eqn:classicalConversion}). Therefore, $\mathrm{Decode}(\bF^i_{\cA}, \cM)$ returns $S_2 = \supp(\bX_2)$. Similarly, for the second possibility, $\mathrm{Decode}(\bF^i_{\cA}, \cM)$ returns $S_1 = \supp(\bX_1)$. In summary, if the procedure in Fig.~\ref{fig:Dec_Rand} reaches the step of finding $j \in t_{\cB}$ satisfying either of the above two dot points, then it produces $S_1$ and $S_2$ successfully (provided that both standard GT subroutines succeed).

By Lemma~\ref{lem:S1S2}, the probability that none of the $A_i \cap S_1 \cap S_2$ contain only items in $S_1$ for $i = 1, \ldots, t_{\cA}$ (called event $\cE_A$), or none of $B_i \cap S_1 \cap S_2$ the contain only items in $S_2$ for $i = 1, \ldots, t_{\cB}$ (called event $\cE_B$) is at most $\epsilon_1 + \epsilon_2$. Let $\cE_C$ be the event that \emph{either} of the two invocations of $\mathrm{Decode}(\bZ = \cM \odot \bX, \cM)$ fail to recover $\bX$ when $|\bX| \leq s_{\max}$. By a union bound over the two invocations of decoding, we have $\Pr[\cE_C] \leq 2\epsilon_0$, where $\epsilon_0$ was characterized in \eqref{eq:eps0} and its subsequent discussion. Therefore, the probability that $S_1$ and $S_2$ are not recovered is at most
\begin{align}
\Pr[\cE_A^c \cap \cE_B^c] \cdot \Pr[\cE_C] + \Pr[\cE_A \cup \cE_B] \leq (1 - \epsilon_1 - \epsilon_2) 2\epsilon_0 + \epsilon_1 + \epsilon_2 \leq 2\epsilon_0 + \epsilon_1 + \epsilon_2. \label{eqn:TotalProb}
\end{align}
Since we have defined $\epsilon/4 = \max\{\epsilon_0, \epsilon_1, \epsilon_2 \}$, we get $2\epsilon_0 + \epsilon_1 + \epsilon_2 \le \epsilon$. 

\textbf{Complexity:} When $\epsilon_1 = \epsilon_2 = \epsilon^\prime$ (which is at most $\frac{\epsilon}{4}$), the number of tests is
\begin{align}
&(t_{\cA} + t_{\cA} t_{\cB} + t_{\cA} m) + (t_{\cB} + t_{\cB} m) \nonumber \\
&= \left( \frac{\mathrm{e}^5}{4 \pi^2} \log{\frac{1}{\epsilon_1}} \cdot \frac{s}{s_1} + \frac{\mathrm{e}^5}{4 \pi^2} \log{\frac{1}{\epsilon_2}} \cdot \frac{s}{s_2} + 1 \right) \times O(s_{\max} \log{n}) + \frac{\mathrm{e}^5}{4 \pi^2} \log{\frac{1}{\epsilon_1}} \cdot \frac{s}{s_1} \times \frac{\mathrm{e}^5}{4 \pi^2} \log{\frac{1}{\epsilon_2}} \cdot \frac{s}{s_2} \\
&= O \left( \frac{s_{\max}^2}{s_{\min}} \log{n} \cdot \log{\frac{1}{\epsilon^\prime}} \right) + O \left( \frac{s_{\max}}{s_{\min}} \log^2{\frac{1}{\epsilon^\prime}} \right) \\
&= O \left( \frac{s_{\max}^2}{s_{\min}} \log{n} \right), 
\end{align}
where the last line holds because $\epsilon^\prime$ is a constant.

The decoding time has the same order as the number of tests because the time for running $\mathrm{Decode}(\bZ = \cM \odot \bX, \cM)$ with $|\bX| \leq s_{\max}$ is linear with respect to the number of rows in $\cM$, i.e., $m$~\cite{cheraghchi2020combinatorial}. Therefore, the decoding time of our proposed algorithm is $O\big( \frac{s_{\max}^2}{s_{\min}} \log{n} \big)$. This completes the proof of the first part of Theorem~\ref{thm:rnd}.

\subsection{Two-stage design}
\label{sec:rnd:2Stages}

Next we design a two-stage procedure to identify $S_1$ and $S_2$, illustrated in Fig.~\ref{fig:EncDec2}. 

\subsubsection{Encoding and decoding procedure}
\label{sec:rnd:2Stage:EncDec}

In the first stage, we generate two indicator matrices $\cA$ and $\cB$ with their corresponding indicator vectors $\bY_{\cA}$ and $\bY_{\cB}$, $t_{\cA} t_{\cB}$ tests, and the same matrix $\cM$ as in Section~\ref{sec:rnd} (i.e., one from \cite{cheraghchi2020combinatorial}), with $t_{\cA}$ and $t_{\cB}$ chosen using Lemma \ref{lem:S1S2}.  During decoding, we find indices $i \in [t_{\cA}]$ and $j \in [t_{\cB}]$ such that $\test(A_i) = 0$, $\test(B_j) = 0$, and $\test(A_i, B_j) = 1$. In the second stage, we get $S_1$ and $S_2$ by calling $\mathrm{Decode}(\test(\cM, A_i), \cM)$ and $\mathrm{Decode}(\test(\cM, B_j), \cM)$.

\begin{figure}
\centering
\includegraphics[scale=0.5]{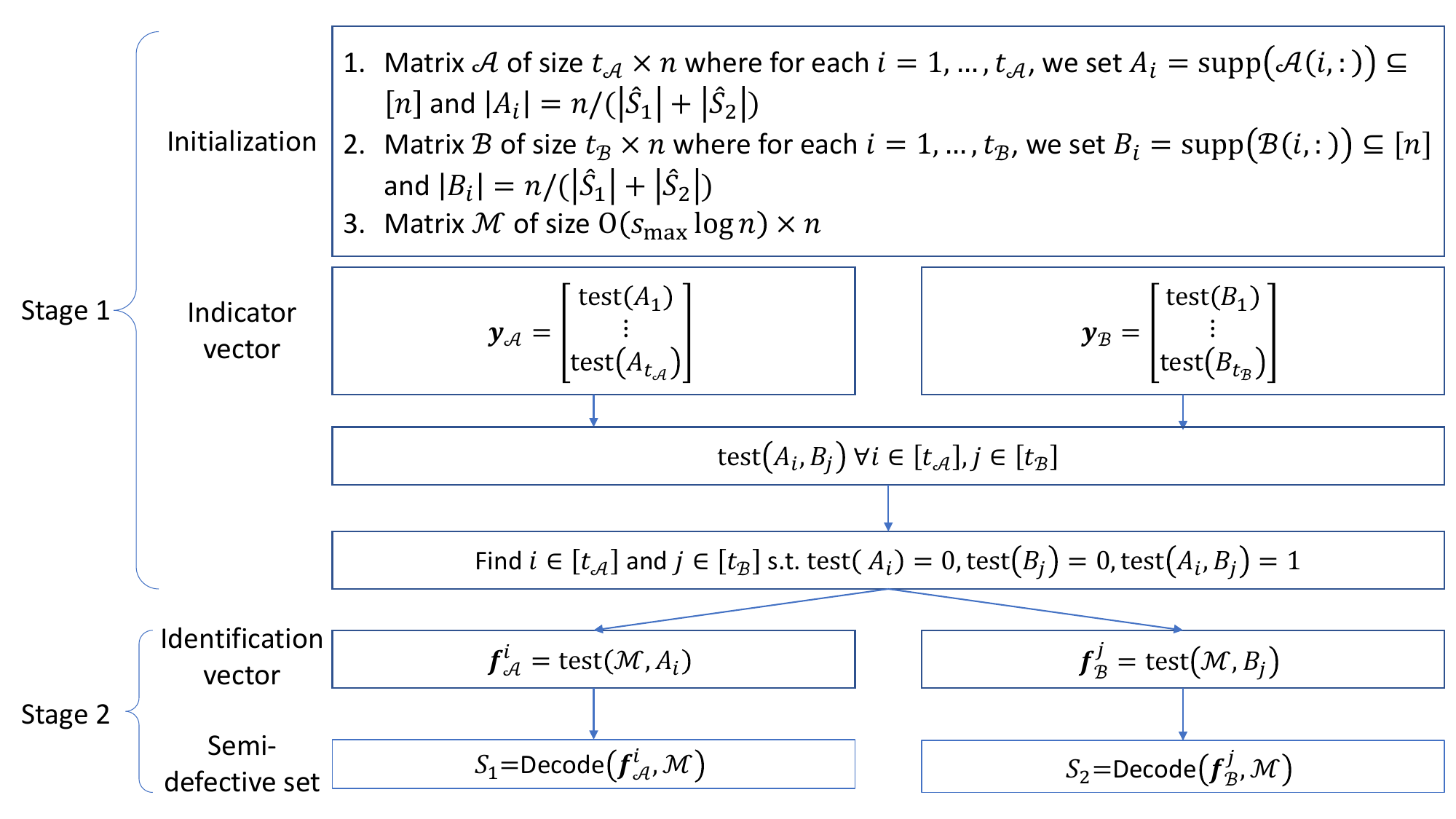}
\caption{Encoding and decoding procedure for the two-stage and randomized design.}
\label{fig:EncDec2}
\end{figure}

\subsubsection{Correctness}
\label{sec:rnd:2Stage:Correct}

In the first stage, if we can find $(i,j)$ such that $\test(A_i) = 0$, $\test(B_j) = 0$, and $\test(A_i, B_j) = 1$, then $A_i$ must contain some item in $S_1$ and none in $S_2$, and $B_j$ must contain some item in $S_2$ and none in $S_1$, or vice versa. Therefore, in the second stage, $S_1$ and $S_2$ are identified by calling $\mathrm{Decode}(\test(\cM, A_i), \cM)$ and $\mathrm{Decode}(\test(\cM, B_j), \cM)$. Similar to the analysis of our non-adaptive design, the overall error probability is at most $\epsilon$.

\subsubsection{Complexity}
\label{sec:rnd:2Stage:complx}

When $\epsilon_1 = \epsilon_2 = \epsilon^\prime$, the number of tests is
\begin{align}
&t_{\cA} + t_{\cB} + t_{\cA} t_{\cB} + 2m \\
&= \left( \frac{\mathrm{e}^5}{4 \pi^2} \log{\frac{1}{\epsilon_1}} \cdot \frac{s}{s_1} + \left( \frac{\mathrm{e}^5}{4 \pi^2} \log{\frac{1}{\epsilon_2}} \right) \cdot \left( \frac{s}{s_2} \right) + \frac{\mathrm{e}^5}{4 \pi^2} \log{\frac{1}{\epsilon_1}} \cdot \frac{s}{s_1} \times \frac{\mathrm{e}^5}{4 \pi^2} \log{\frac{1}{\epsilon_2}} \cdot \frac{s}{s_2} \right) + O(s_{\max} \log{n}) \\
&= O(s_{\max} \log{n}) + O \left( \frac{s_{\max}}{s_{\min}} \log^2{\frac{1}{\epsilon^\prime}} \right)\\
&= O(s_{\max} \log{n}), 
\end{align}
because $\epsilon^\prime$ is a constant. 
The decoding time again has the same order as the number of tests due to \cite{cheraghchi2020combinatorial}. Therefore, the decoding time of our proposed algorithm is $O \left( s_{\max} \log{n} \right)$.  We have thus established the second part of Theorem \ref{thm:rnd}.

\subsection{Three-stage design}
\label{sec:rnd:3Stage}

\subsubsection{Encoding and decoding procedure}
\label{sec:rnd:3Stage:EncDec}

In this section, we design a three-stage procedure to identify $S_1$ and $S_2$.  The first stage is identical to our two-stage design, but we now improve the standard group testing subroutine by having it span a further two stages.  Specifically, let $\mathrm{Decode}^\prime(\bZ = \cM \odot \bX, \cM)$ be a deterministic 2-stage procedure that returns vector $\bX$ where $|\bX| \leq s_{\max}$. It has been shown that there exists such an $m \times n$ matrix $\cM$ with $m = O \big( s_{\max} \log{\frac{n}{s_{\max}}} \big)$, which is information-theoretically optimal, that can be decoded in time $O \big( s_{\max} \log^2{\frac{n}{s_{\max}}} \big)$~\cite[Theorem 3.2]{cheraghchi2020combinatorial}.\footnote{We could alternatively use the design of de Bonis \cite{de2005optimal} (as we did in our fully adaptive design) and attain the same scaling in the number of tests but with more explicit constants.  However, we use \cite{cheraghchi2020combinatorial} here because it comes with the advantage of improved decoding time.}  We recover $S_1$ and $S_2$ by calling $\mathrm{Decode}^\prime(\test(\cM, A_i), \cM)$ and $\mathrm{Decode}^\prime(\test(\cM, B_j), \cM)$, and this constitutes our second and third stages with the two invocations being run in parallel.

\subsubsection{Correctness}
\label{sec:rnd:3Stage:Correct}

Similar to the proof in Section~\ref{sec:rnd:2Stage:Correct}, once $\test(A_i) = 0$, $\test(B_j) = 0$, $A_i$ must contain some item in $S_1$ and none in $S_2$, and $B_j$ must contain some item in $S_2$ and none in $S_1$, or vice versa. Since the second and third stages are deterministic, $S_1$ and $S_2$ are recovered.

The event that there does not exist $i \in [t_{\cA}]$ and $j \in t_{\cB}$ such that $\test(A_i) = 0$, $\test(B_j) = 0$, and $\test(A_i, B_j) = 1$ occurs if the events $\cE_A$ or $\cE_B$ in Section~\ref{sec:rnd:1Stage:DecCorrect} occur, constituting an error probability of up to $\epsilon_1 + \epsilon_2$. Since the second and third stages have zero error probability \cite{cheraghchi2020combinatorial}, the probability that $S_1$ and $S_2$ are not recovered is at most $\epsilon_1 + \epsilon_2$.

\subsubsection{Complexity}
\label{sec:rnd:3Stage:complx}

Similar to Section~\ref{sec:rnd:2Stage:complx}, when $\epsilon_1 = \epsilon_2 = \epsilon^\prime$, the number of tests is $t_{\cA} + t_{\cB} + t_{\cA} t_{\cB} + m = O\big(s_{\max} \log{\frac{n}{s_{\max}}}\big)$ because $\epsilon^\prime$ is a constant.  Moreover, once the test results have been attained, the decoding time is
\begin{align}
&t_{\cA} + t_{\cB} + t_{\cA} t_{\cB} + O(m \log{m}) \\
&= \left( \frac{\mathrm{e}^5}{4 \pi^2} \log{\frac{1}{\epsilon_1}} \cdot \frac{s}{s_1} + \frac{\mathrm{e}^5}{4 \pi^2} \log{\frac{1}{\epsilon_2}} \cdot \frac{s}{s_2} \right) + \frac{\mathrm{e}^5}{4 \pi^2} \log{\frac{1}{\epsilon_1}} \cdot \frac{s}{s_1} \times \frac{\mathrm{e}^5}{4 \pi^2} \log{\frac{1}{\epsilon_2}} \cdot \frac{s}{s_2} + O\left( s_{\max} \log^2{\frac{n}{s_{\max}}} \right) \\
&= O\left( s_{\max} \log^2{\frac{n}{s_{\max}}} \right) + O \left( \frac{s_{\max}}{s_{\min}} \log^2{\frac{1}{\epsilon^\prime}} \right). \\
&= O\left( s_{\max} \log^2{\frac{n}{s_{\max}}} \right),
\end{align}
because $\epsilon^\prime$ is a constant.  We have thus established the third part of Theorem \ref{thm:rnd}.

\subsection{Las Vegas design}
\label{sec:rnd:LasVegas}

Here we present a Las Vegas design that uses a random number of stages and tests and attains zero error probability. The idea is as follows:
\begin{enumerate}
\item Run the ``find $(A_i,B_j)$ intersecting one of $(S_1,S_2)$ each'' subroutine (i.e., the same as the first stage of our 2-stage and 3-stage designs) with success probability $1 - \alpha$ for some $\alpha > 0$.  Note that by the construction of this subroutine, whenever it succeeds, the algorithm has 100\% certainty that it did so, since there is no other way to attain  $\test(A_i) = 0$, $\test(B_j) = 0$, and $\test(A_i, B_j) = 1$.
\item Repeat the previous step as needed until the first time it succeeds.
\item Proceed with the second and third stages of the three-stage design in Section~\ref{sec:rnd:3Stage}.
\end{enumerate}

The last step above (i.e., the final two stages) has zero error probability and is already well-characterized, so it remains to study the first two steps. For the this part, the number of failures until the first success follows the geometric distribution with parameter $1 - \alpha$, which means it has expectation $1/(1 - \alpha) = 1+O(\alpha)$.  In particular, by choosing $\alpha$ sufficiently small, we find that it constitutes $1+\eta$ average stages for any pre-specified $\eta > 0$.

To understand $\alpha$ itself, we use the same reasoning as the first stage in Section~\ref{sec:rnd:2Stage:EncDec}. Because of the analysis in~\eqref{eqn:TotalProb}, the probability that one does not find a defective pair is at most $\alpha = \epsilon_1 + \epsilon_2$.  Since $\epsilon_1$ and $\epsilon_2$ can be arbitrarily small, so can $\alpha$.  The (expected) number of tests and runtime also follows in an identical manner to the three-stage design from Section~\ref{sec:rnd:3Stage}.  We have thus established the final part of Theorem \ref{thm:rnd}.

\section{Conclusion}
\label{sec:cls}

In this paper, we introduced a new model called concomitant group testing that embodies the phenomenon that either (i) multiple defective components lead to an overall failure, or (ii) items in various groups ``cooperate'' to produce a given outcome.  We have developed algorithms and upper bounds on the number of tests for a number of variations (depending on the success criterion, number of adaptive stages, etc.), with optimal or near-optimal scaling in many cases of interest.  Beyond closing the remaining gaps (e.g., for non-adaptive deterministic designs), perhaps the most immediate direction for future research is to better understand the case that $m > 2$, particularly under non-adaptive, 2-stage, and 3-stage designs.  It may also be of interest to seek more precise constant factors in the number of tests for each setting.

\bibliographystyle{ieeetr}
\balance
\bibliography{bibli}

\begin{thebibliography}{10}

\bibitem{torney1999sets}
D.~C. Torney, ``Sets pooling designs,'' {\em Annals of Combinatorics}, vol.~3,
  no.~1, pp.~95--101, 1999.

\bibitem{chang2010identification}
H.~Chang, H.-B. Chen, and H.-L. Fu, ``Identification and classification
  problems on pooling designs for inhibitor models,'' {\em Journal of
  Computational Biology}, vol.~17, no.~7, pp.~927--941, 2010.

\bibitem{chen2008upper}
H.-B. Chen, H.-L. Fu, and F.~K. Hwang, ``An upper bound of the number of tests
  in pooling designs for the error-tolerant complex model,'' {\em Optimization
  Letters}, vol.~2, no.~3, pp.~425--431, 2008.

\bibitem{chin2013non}
F.~Y. Chin, H.~C. Leung, and S.-M. Yiu, ``Non-adaptive complex group testing
  with multiple positive sets,'' {\em Theoretical Computer Science}, vol.~505,
  pp.~11--18, 2013.

\bibitem{angluin2008learning}
D.~Angluin and J.~Chen, ``Learning a hidden graph using {$O(\log n)$} queries
  per edge,'' {\em Journal of Computer and System Sciences}, vol.~74, no.~4,
  pp.~546--556, 2008.

\bibitem{abasi2018non}
H.~Abasi, N.~H. Bshouty, and H.~Mazzawi, ``Non-adaptive learning of a hidden
  hypergraph,'' {\em Theoretical Computer Science}, vol.~716, pp.~15--27, 2018.

\bibitem{abasi2019learning}
H.~Abasi and B.~Nader, ``On learning graphs with edge-detecting queries,'' in
  {\em Algorithmic Learning Theory}, pp.~3--30, PMLR, 2019.

\bibitem{bouvel2005combinatorial}
M.~Bouvel, V.~Grebinski, and G.~Kucherov, ``Combinatorial search on graphs
  motivated by bioinformatics applications: A brief survey,'' in {\em
  International Workshop on Graph-Theoretic Concepts in Computer Science},
  pp.~16--27, Springer, 2005.

\bibitem{shin2009reactive}
I.~Shin, Y.~Shen, Y.~Xuan, M.~T. Thai, and T.~Znati, ``Reactive jamming attacks
  in multi-radio wireless sensor networks: An efficient mitigating measure by
  identifying trigger nodes,'' in {\em ACM International Workshop on
  Foundations of Wireless ad hoc and Sensor Networking and Computing},
  pp.~87--96, 2009.

\bibitem{NCI}
``National cancer institute.''
  \url{https://www.cancer.gov/publications/dictionaries/cancer-terms/}.
\newblock Accessed: 2023-06-26.

\bibitem{rao2011modern}
V.~S. Rao and K.~Srinivas, ``Modern drug discovery process: An in silico
  approach,'' {\em Journal of bioinformatics and sequence analysis}, vol.~2,
  no.~5, pp.~89--94, 2011.

\bibitem{malioutov2017learning}
D.~M. Malioutov, K.~R. Varshney, A.~Emad, and S.~Dash, ``Learning interpretable
  classification rules with {B}oolean compressed sensing,'' {\em Transparent
  Data Mining for Big and Small Data}, pp.~95--121, 2017.

\bibitem{baldassini2013capacity}
L.~Baldassini, O.~Johnson, and M.~Aldridge, ``The capacity of adaptive group
  testing,'' in {\em IEEE International Symposium on Information Theory},
  pp.~2676--2680, IEEE, 2013.

\bibitem{dorfman1943detection}
R.~Dorfman, ``The detection of defective members of large populations,'' {\em
  The Annals of Mathematical Statistics}, vol.~14, no.~4, pp.~436--440, 1943.

\bibitem{du2000combinatorial}
D.~Du, F.~K. Hwang, and F.~Hwang, {\em Combinatorial group testing and its
  applications}, vol.~12.
\newblock World Scientific, 2000.

\bibitem{DyachkovPSV19}
A.~G. D'yachkov, N.~Polyanskii, V.~Y. Shchukin, and I.~Vorobyev, ``Separable
  codes for the symmetric multiple-access channel,'' {\em {IEEE} Transactions
  on Information Theory}, vol.~65, no.~6, pp.~3738--3750, 2019.

\bibitem{shental2020efficient}
N.~Shental, S.~Levy, V.~Wuvshet, S.~Skorniakov, B.~Shalem, A.~Ottolenghi,
  Y.~Greenshpan, R.~Steinberg, A.~Edri, R.~Gillis, {\em et~al.}, ``Efficient
  high-throughput {SARS-CoV-2} testing to detect asymptomatic carriers,'' {\em
  Science advances}, vol.~6, no.~37, p.~eabc5961, 2020.

\bibitem{gabrys2020ac}
R.~Gabrys, S.~Pattabiraman, V.~Rana, J.~Ribeiro, M.~Cheraghchi, V.~Guruswami,
  and O.~Milenkovic, ``{AC-DC}: Amplification curve diagnostics for {C}ovid-19
  group testing,'' {\em arXiv preprint arXiv:2011.05223}, 2020.

\bibitem{kautz1964nonrandom}
W.~Kautz and R.~Singleton, ``Nonrandom binary superimposed codes,'' {\em IEEE
  Transactions on Information Theory}, vol.~10, no.~4, pp.~363--377, 1964.

\bibitem{d1yachkov982bounds}
A.~G. D'yachkov and V.~V. Rykov, ``Bounds on the length of disjunctive codes,''
  {\em Problemy Peredachi Informatsii}, vol.~18, no.~3, pp.~7--13, 1982.

\bibitem{porat11}
E.~Porat and A.~Rothschild, ``Explicit nonadaptive combinatorial group testing
  schemes,'' {\em IEEE Transactions on Information Theory}, vol.~57, no.~12,
  pp.~7982--7989, 2011.

\bibitem{indyk2010efficiently}
P.~Indyk, H.~Q. Ngo, and A.~Rudra, ``Efficiently decodable non-adaptive group
  testing,'' in {\em ACM-SIAM Symposium on Discrete Algorithms},
  pp.~1126--1142, SIAM, 2010.

\bibitem{ngo2011efficiently}
H.~Q. Ngo, E.~Porat, and A.~Rudra, ``Efficiently decodable error-correcting
  list disjunct matrices and applications,'' in {\em International Colloquium
  on Automata, Languages and Programming (ICALP)}.

\bibitem{cheraghchi2013noise}
M.~Cheraghchi, ``Noise-resilient group testing: Limitations and
  constructions,'' {\em Discrete Applied Mathematics}, vol.~161, no.~1-2,
  pp.~81--95, 2013.

\bibitem{cheraghchi2020combinatorial}
M.~Cheraghchi and V.~Nakos, ``Combinatorial group testing and sparse recovery
  schemes with near-optimal decoding time,'' in {\em IEEE Annual Symposium on
  Foundations of Computer Science (FOCS)}, pp.~1203--1213, IEEE, 2020.

\bibitem{de2005optimal}
A.~De~Bonis, L.~Gasieniec, and U.~Vaccaro, ``Optimal two-stage algorithms for
  group testing problems,'' {\em SIAM Journal on Computing}, vol.~34, no.~5,
  pp.~1253--1270, 2005.

\bibitem{hwang1972method}
F.~K. Hwang, ``A method for detecting all defective members in a population by
  group testing,'' {\em Journal of the American Statistical Association},
  vol.~67, no.~339, pp.~605--608, 1972.

\bibitem{mezard2011group}
M.~M{\'e}zard and C.~Toninelli, ``Group testing with random pools: Optimal
  two-stage algorithms,'' {\em IEEE Transactions on Information Theory},
  vol.~57, no.~3, pp.~1736--1745, 2011.

\bibitem{aldridge2020conservative}
M.~Aldridge, ``Conservative two-stage group testing in the linear regime,''
  {\em arXiv preprint arXiv:2005.06617}, 2020.

\bibitem{aldridge2019group}
M.~Aldridge, O.~Johnson, J.~Scarlett, {\em et~al.}, ``Group testing: an
  information theory perspective,'' {\em Foundations and
  Trends{\textregistered} in Communications and Information Theory}, vol.~15,
  no.~3-4, pp.~196--392, 2019.

\bibitem{aldridge2014group}
M.~Aldridge, L.~Baldassini, and O.~Johnson, ``Group testing algorithms: Bounds
  and simulations,'' {\em IEEE Transactions on Information Theory}, vol.~60,
  no.~6, pp.~3671--3687, 2014.

\bibitem{scarlett2016phase}
J.~Scarlett and V.~Cevher, ``Phase transitions in group testing,'' in {\em
  ACM-SIAM symposium on Discrete algorithms}, pp.~40--53, SIAM, 2016.

\bibitem{coja2020optimal}
A.~Coja-Oghlan, O.~Gebhard, M.~Hahn-Klimroth, and P.~Loick, ``Optimal group
  testing,'' in {\em Conference on Learning Theory}, pp.~1374--1388, PMLR,
  2020.

\bibitem{price2020fast}
E.~Price and J.~Scarlett, ``A fast binary splitting approach to non-adaptive
  group testing,'' {\em International Conference on Randomization and
  Computation (RANDOM)}, 2020.

\bibitem{damaschke2006threshold}
P.~Damaschke, ``Threshold group testing,'' in {\em General Theory of
  Information Transfer and Combinatorics}, pp.~707--718, Springer, 2006.

\bibitem{cheraghchi2013improved}
M.~Cheraghchi, ``Improved constructions for non-adaptive threshold group
  testing,'' {\em Algorithmica}, vol.~67, no.~3, pp.~384--417, 2013.

\bibitem{de2017subquadratic}
G.~De~Marco, T.~Jurdzi{\'n}ski, M.~R{\'o}{\.z}a{\'n}ski, and G.~Stachowiak,
  ``Subquadratic non-adaptive threshold group testing,'' in {\em Fundamentals
  of Computation Theory}, pp.~177--189, Springer, 2017.

\bibitem{chen2009nonadaptive}
H.-B. Chen and H.-L. Fu, ``Nonadaptive algorithms for threshold group
  testing,'' {\em Discrete Applied Mathematics}, vol.~157, no.~7,
  pp.~1581--1585, 2009.

\bibitem{bui2021improved}
T.~V. Bui, M.~Cheraghchi, and I.~Echizen, ``Improved non-adaptive algorithms
  for threshold group testing with a gap,'' {\em IEEE Transactions on
  Information Theory}, vol.~67, no.~11, pp.~7180--7196, 2021.

\bibitem{bui2023non}
T.~V. Bui and J.~Scarlett, ``Non-adaptive algorithms for threshold group
  testing with consecutive positives,'' {\em Information and Inference: A
  Journal of the IMA}, vol.~12, no.~3, p.~iaad009, 2023.

\bibitem{angluin1988queries}
D.~Angluin, ``Queries and concept learning,'' {\em Machine Learning}, vol.~2,
  no.~4, pp.~319--342, 1988.

\bibitem{bondorf2020sublinear}
S.~Bondorf, B.~Chen, J.~Scarlett, H.~Yu, and Y.~Zhao, ``Sublinear-time
  non-adaptive group testing with {$O(k \log{n})$} tests via bit-mixing
  coding,'' {\em IEEE Transactions on Information Theory}, vol.~67, no.~3,
  pp.~1559--1570, 2020.

\bibitem{chan2013stochastic}
C.~L. Chan, S.~Cai, M.~Bakshi, S.~Jaggi, and V.~Saligrama, ``Stochastic
  threshold group testing,'' in {\em 2013 IEEE Information Theory Workshop
  (ITW)}, pp.~1--5, IEEE, 2013.

\end{thebibliography}

\appendices

\section{Proof of Lemma~\ref{lem:S1S2}}
\label{appx:sec:lem}

The probability of picking $a < s$ items in $S$ with $|S| = s$ when we sample $na/s$ items without replacement from $n$ items is~\cite[Lemma 4]{chan2013stochastic}
\begin{equation}
H\left(n, \frac{na}{s}, s, a\right) = \frac{\binom{n - s}{\frac{na}{s} - a} \binom{s}{a} }{\binom{n}{\frac{na}{s}} } \geq \frac{4 \pi^2}{\mathrm{e}^5} \frac{1}{\sqrt{a}} \sqrt{\frac{s}{s - a}} \sqrt{\frac{n}{n - s}} \geq \frac{4 \pi^2}{\mathrm{e}^5} \frac{1}{\sqrt{a}} .
\end{equation}
Moreover, conditioned on such an event, the probability of those items all belonging to $S_1$ is $ \frac{\binom{s_1}{a} }{\binom{s}{a} }$ due to symmetry.  Hence, the overall probability of picking $a$ items from $S_1$ and none from $S_2 = S \setminus S_1$ is
\begin{equation}
\frac{\binom{s_1}{a} }{\binom{s}{a} } H \left(n, \frac{na}{s}, s, a\right) \geq \frac{\binom{s_1}{a} }{\binom{s}{a} } \frac{4 \pi^2}{\mathrm{e}^5} \frac{1}{\sqrt{a}}. % \sqrt{\frac{s}{s - a}} \sqrt{\frac{n}{n - s}} % \geq \left( \frac{s_1 - a + 1}{s - a + 1} \right)^a \frac{1}{\sqrt{a}} \frac{4 \pi^2}{\mathrm{e}^5}.
\end{equation}

In accordance with the statement of Lemma \ref{appx:sec:lem}, we set $a = 1$.  Then, by sampling $n/s$ items from $n$ items with replacement $t_1$ times, the probability that none of the $t_1$ trials yield the above condition (i.e, having one item from $S_1$ and none from $S \setminus S_1$) is at most
\begin{align}
\left( 1 - \frac{\binom{s_1}{1} }{\binom{s}{1} } H \left(n, \frac{n}{s}, s, 1 \right) \right)^{t_1} \leq \left( 1 - \frac{s_1}{s} \cdot \frac{4 \pi^2}{\mathrm{e}^5} \right)^{t_1} \leq \epsilon_1, \label{eqn:t1}
\end{align}
where $t_1 = \frac{\mathrm{e}^5}{4 \pi^2} \log{\frac{1}{\epsilon_1}} \cdot \frac{s}{s_1}$ for any $\epsilon_1 > 0$.

Similarly, by sampling $n/s$ items from $n$ items with replacement $t_2$ times, the probability that none of the $t_2$ trials yield an analogous condition (i.e., having one item from $S_2$ and none from $S_1$) is at most
\begin{align}
\left( 1 - \frac{\binom{s_2}{1} }{\binom{s}{1} } H \left(n, \frac{n}{s}, s, 1 \right) \right)^{t_2} \leq \left( 1 - \frac{s_2}{s} \cdot \frac{4 \pi^2}{\mathrm{e}^5} \right)^{t_2} \leq \epsilon_2, \label{eqn:t2}
\end{align}
where $t_2 = \frac{\mathrm{e}^5}{4 \pi^2} \log{\frac{1}{\epsilon_2}} \cdot \frac{s}{s_2}$ for any $\epsilon_2 > 0$.

\section{Relaxing the Assumption of Known Semi-Defective Set Sizes} \label{app:KnownConst}

In our analysis of randomized designs, the requirement of known cardinalities $s_1$ and $s_2$ enters via the preceding proof of Lemma~\ref{lem:S1S2}, and in particular the fact that the number of items sampled without replacement is exactly $\frac{n}{s}$ with $s = s_1 + s_2$.  By making this choice, we are assuming that the test designer knows $s$.   Other parts of our analysis (e.g., invocations of standard group testing subroutines such as \cite{cheraghchi2020combinatorial}) only require \emph{upper bounds} on $s_1$ and $s_2$ to be known, which is a much milder requirement.

In this appendix, we outline how the same scaling laws can be attained when $s_1$ and $s_2$ are only known to within constant factors.  
More specifically, suppose it is known that $s_1 \le |S_1| \le c s_1$ for some known constant $c > 1$, and similarly for $|S_2|$.  (Note that by rescaling, we can form $s'_1$ (and $s'_2$) such that $\frac{1}{\sqrt{c}} s'_1 \le |S_1| \le \sqrt{c} s_1$, or such that $\frac{1}{c} s'_1 \le |S_1| \le s_1$.)  To simplify the outline, we consider a variation of Lemma \ref{lem:S1S2} that samples $\frac{cn}{s}$ items \emph{with} replacement; this is an equally valid test design as the one without replacement, but is slightly easier to generalize to the case that $c > 1$.

If we sample $\frac{n}{s}$ items with replacement (with $s = s_1 + s_2$), then the probability of selecting a single item from $S = S_1 \cup S_2$ is exactly $\frac{n}{s} \cdot \frac{s^*}{n} \cdot \big(1-\frac{s^*}{n}\big)^{\frac{n}{s}-1}$, where $s^* = |S_1| + |S_2|$.  Since $s \le s^* \le cs$ by assumption, the preceding expression is lower bounded by $\big(1-\frac{cs}{n}\big)^{\frac{n}{s}-1}$.  If $s = \Theta(n)$ then this trivially scales as $\Theta(1)$, whereas if $s = o(n)$ then this asymptotically approaches $e^{-c}$, which is again $\Theta(1)$.  

Then, by the same argument as Lemma \ref{lem:S1S2}, the probability of getting a single item from $S_1$ and none from $S_2$ is $\Theta\big( \frac{s_1}{s} \big)$, and similarly if the roles of $S_1$ and $S_2$ are reversed.  This gives us the desired counterpart to Lemma \ref{lem:S1S2} (albeit with a potentially poor constant factor $e^{-c}$), and the other parts of our analysis for randomized designs remains unchanged.

\section{Full comparison table}
\label{appx:sec:fullCmp}

In Table~\ref{tbl:FullCmp}, we replicate the results from Table~\ref{tbl:cmp} but also add the sufficient number of tests that would be obtained by specializing graph learning results to our setting (see Section \ref{sub:intro:hiddengraphs}).  We see that such specializations tend to be highly suboptimal; in fairness, this is because they solved a more general problem rather than specifically seeking to solve concomitant GT.  In short, their increased generality comes at the expense of being highly suboptimal for our problem.

\begin{table}[]
\centering
\begin{tabular}{|c|c|c|l|c|}
\hline
\begin{tabular}[c]{@{}c@{}}Design\\ type\end{tabular} & No. of stages & \multicolumn{1}{c|}{\begin{tabular}[c]{@{}c@{}}Semi-defective\\ sets\end{tabular}} & \multicolumn{1}{c|}{Scheme} & No. of tests \\
\hline
\multicolumn{1}{|l|}{Arbitrary} & \multicolumn{1}{c|}{Any} & \begin{tabular}[c]{@{}c@{}} $|P_i| \leq s_i$ for \\ $i = 1, \ldots, m$ \end{tabular} & \begin{tabular}[c]{@{}l@{}} Any (Lower bound) \\\textbf{Theorem~\ref{thm:lower}} \end{tabular} & \multicolumn{1}{l|}{\begin{tabular}[c]{@{}l@{}} $\Omega \big( \sum_{i = 1}^m s_i \log{\frac{n}{s_i}} \big)$ if $\sum_{v = 1}^{m} s_v \leq \frac{n}{2}$ \end{tabular}} \\ \hline
\multirow{5}{*}{\begin{tabular}[c]{@{}l@{}} \\ \\ Deterministic \end{tabular}} & \multirow{2}{*}{1} & \multirow{4}{*}{\begin{tabular}[c]{@{}l@{}} \\ $|P_1| \leq s_1$ \\ $|P_2| \leq s_2$ \end{tabular}} & \textbf{Theorem~\ref{thm:nonadaptive}} & $O \left( s_{\max}^3 \log{\frac{n}{s_{\max}}} \right)$ \\
\cline{4-5} & & & Chen et al.~\cite{chen2009nonadaptive} & $O \left( (s_1 s_2)^3 \log{\frac{n}{s_1 s_2}} \right)$ \\
\cline{2-2} \cline{4-5} & 2 & & \begin{tabular}[c]{@{}l@{}} Abasi and Nader~\cite{abasi2019learning} \end{tabular} & $O\big( (s_1 s_2)^2 \log{n} \big)$ \\
\cline{2-2} \cline{4-5} & $\log_2{n}$ & & \multirow{2}{*}{\begin{tabular}[c]{@{}l@{}} \\ \textbf{Theorem}~\ref{thm:adaptive:multi} \end{tabular}} & $O\left( s_1 \log{\frac{n}{s_1}} + s_2 \log{\frac{n}{s_2}} \right)$ \\
\cline{2-3} \cline{5-5} & $\frac{m \log{n}}{1 - \log{2}} + 3$ & \begin{tabular}[c]{@{}c@{}} $|P_i| \leq s_i$ for \\ $i = 1, \ldots, m$ \end{tabular} & & $O\left( m^2 \log{n} + \sum_{i = 1}^m  s_i \log{\frac{n}{s_i} } \right)$ \\
\hline
\multirow{8}{*}{\begin{tabular}[c]{@{}l@{}} \\ \\ \\ \\ Random \end{tabular}} & \multirow{2}{*}{1} & \begin{tabular}[c]{@{}l@{}} $|P_1| = s_1$ \\ $|P_2|= s_2$ \end{tabular} & \begin{tabular}[c]{@{}l@{}} \textbf{Theorem}~\ref{thm:rnd}\\ (Monte Carlo)\end{tabular} & $O \big( s_{\max}^2 \log{\frac{n}{s_{\max}}} \big)$ \\
\cline{3-5} & & \begin{tabular}[c]{@{}l@{}} $|P_1| \leq s_1$ \\ $|P_2| \leq s_2$ \end{tabular} & \begin{tabular}[c]{@{}l@{}}Abasi and Nader~\cite{abasi2019learning}\\ (Monte Carlo)\end{tabular} & $O\big( (s_1 s_2)^2 \log{n} \big)$  \\
\cline{2-5} & \multirow{3}{*}{2} & \begin{tabular}[c]{@{}l@{}} $|P_1| = s_1$ \\ $|P_2| = s_2$ \end{tabular} & \begin{tabular}[c]{@{}l@{}} \textbf{Theorem}~\ref{thm:rnd}\\ (Monte Carlo)\end{tabular} & $O( s_{\max} \log{n} )$ \\
\cline{3-5} & & \multirow{2}{*}{\begin{tabular}[c]{@{}l@{}} $|P_1| \leq s_1$ \\ $|P_2| \leq s_2$ \end{tabular}} & \begin{tabular}[c]{@{}l@{}}Abasi and Nader~\cite{abasi2019learning}\\ (Monte Carlo)\end{tabular} & $O( (s_1 s_2)^{4/3} \log{n})$ \\
\cline{4-5} & & & \begin{tabular}[c]{@{}l@{}}Abasi and Nader~\cite{abasi2019learning}\\ (Las Vegas)\end{tabular} & $O\big( (s_1 s_2)^2 \log{n} \big)$ \\
\cline{2-5} & \multicolumn{1}{c|}{3 (Monte Carlo); avg. $3 + \eta$ (Las Vegas)} & \begin{tabular}[c]{@{}l@{}} $|P_1| = s_1$ \\ $|P_2| = s_2$ \end{tabular} & \textbf{Theorem}~\ref{thm:rnd} & \multicolumn{1}{c|}{$O \big( s_{\max} \log{\frac{n}{s_{\max}}} \big)$} \\
\cline{2-5} & \multirow{2}{*}{3} & \multirow{2}{*}{\begin{tabular}[c]{@{}l@{}} $|P_1| \leq s_1$ \\ $|P_2| \leq s_2$ \end{tabular}} & \begin{tabular}[c]{@{}l@{}}Abasi and Nader~\cite{abasi2019learning}\\ (Monte Carlo)\end{tabular} & $O\big( s_1 s_2 \log{n} \big)$ \\
\cline{4-5} & & & \begin{tabular}[c]{@{}l@{}}Abasi and Nader~\cite{abasi2019learning}\\ (Las Vegas)\end{tabular} & $O\big( (s_1 s_2)^{4/3} \log{n} \big)$ \\
\hline
\end{tabular}

\caption{Comparison of proposed schemes with previous ones (which solve a more general graph learning problem that can be specialized to our setting; see Section \ref{sub:intro:hiddengraphs}). Given a population of $n$ items, there are $m \geq 2$ disjoint nonempty semi-defective subsets $S_1, \ldots, S_m \subset [n]$ with $|S_i| \leq s_i$ for $i = 1, 2, \ldots, m$. When considering randomized design, we set $m = 2$, $|S_1| = s_1$, and $|S_2| = s_2$ for our proposed schemes. In the bottom row, $\eta> 0$ is an arbitrarily small constant.}
\label{tbl:FullCmp}
\end{table}

% that's all folks
\end{document}